\DeclareMathOperator*{\argmax}{arg\,max}
\newcommand\sbullet[1][.5]{\mathbin{\vcenter{\hbox{\scalebox{#1}{$\bullet$}}}}}
\newcommand{\bpi}{{\mathlarger{\mathlarger{\pi}}}}
\newcommand{\bz}{{\mathlarger{\mathlarger{z}}}}
\newcommand{\ddmodel}{\texttt{DD-model}}
\newcommand{\Hquad}{\hspace{0.5em}}
\def\parsec{\par\noindent}
\newcommand{\E}{\mathbb{E}}
\renewcommand{\P}{\mathbb{P}}
\newcommand{\nn}{\nonumber}
\renewcommand{\epsilon}{\varepsilon}
\def\deg{{\text{deg}}}
\def\Aut{ {\mathrm{Aut}}}
\newcommand{\bone}{\boldsymbol{1}}
\newcommand{\R}{\mathcal{R}}
\newcommand{\NR}{\mathcal{NR}}
\newcommand{\N}{\mathcal{N}}
\newcommand{\C}{\mathcal{C}}
\newcommand{\cH}{\mathcal{H}}
\newcommand{\cG}{\mathcal{G}}
\newcommand{\cN}{\mathcal{N}}
\newcommand{\xhdr}[1]{\vspace{1.7mm}\noindent{{\bf #1.}}}
\newcommand{\xhdrLessSpace}[1]{\vspace{1 ex}\noindent{{\bf #1.}}}
\newtheorem{theorem}{Theorem}
\newtheorem{lemma}{Lemma}
\theoremstyle{definition}
\newtheorem{remark}{Remark}
\begin{document}

\title{Temporal Ordered Clustering in Dynamic Networks: Unsupervised and Semi-supervised Learning Algorithms}

\author{Krzysztof~Turowski\textsuperscript{\textdagger},
        Jithin~K.~Sreedharan\textsuperscript{\textdagger},
        and~Wojciech~Szpankowski,~\IEEEmembership{Fellow,~IEEE}
\IEEEcompsocitemizethanks{
\IEEEcompsocthanksitem K.~Turowski is with the Theoretical Computer Science Department, Jagiellonian University, Krakow, Poland.\protect\\
E-mail: krzysztof.szymon.turowski@gmail.com.
\IEEEcompsocthanksitem J.~K.~Sreedharan and W.~Szpankowski are with the Dept.\ of Computer Science and the NSF Center for Science and Information, Purdue University, West Lafayette, IN 47907, U.S.A.\protect\\
E-mail: \{jithinks, szpan\}@purdue.edu.
}
\thanks{\textsuperscript{\textdagger} Both the authors contributed equally to this research.}%
\thanks{This work was supported by NSF Center for Science of Information (CSoI)
Grant CCF-0939370, and in addition by NSF Grants CCF-1524312, CCF-2006440,
and CCF-2007238, National Science Center Grant UMO-2016/21/B/ST6/03146 and Google Research Award.
}}

\maketitle

\begin{abstract}
In {\em temporal ordered clustering}, given a single snapshot of a dynamic network in which nodes arrive at distinct time instants, we aim at partitioning its nodes into $K$ ordered clusters $\C_1 \prec \cdots \prec \C_K$ such that for $i<j$, nodes in cluster $\C_i$ arrived before nodes in cluster $\C_j$, with $K$ being a data-driven parameter and not known upfront.
Such a problem is of considerable significance in many applications ranging from tracking the expansion of fake news to mapping the spread of information.
We first formulate our problem for a general dynamic graph, and propose an integer programming framework that finds the optimal clustering, represented as a strict partial order set, achieving the best precision (i.e., fraction of successfully ordered node pairs) for a fixed density (i.e., fraction of comparable node pairs).
We then develop a sequential importance procedure and design unsupervised and semi-supervised algorithms to find temporal ordered clusters that efficiently approximate the optimal solution.
To illustrate the techniques, we apply our methods to the vertex copying (duplication-divergence) model which exhibits some edge-case challenges in inferring the clusters as compared to other network models.
Finally, we validate the performance of the proposed algorithms on synthetic and real-world networks.
\end{abstract}

\begin{IEEEkeywords}
Clustering, dynamic networks, unsupervised learning, semi-supervised learning, temporal order
\end{IEEEkeywords}

\section{Introduction}
The clustering of nodes is a classic problem in networks. In its typical form in static networks, it finds communities where methods like spectral clustering, modularity maximization, minimum-cut method, and hierarchical clustering are commonly used \cite{SCHAEFFER200727}.

However, in dynamic networks that grow over time with nodes or edges getting added or deleted, a criteria of clustering based on its temporal characteristics finds significant relevance in practice since it helps us to study the existence of certain network structures and their future behavior.
One approach to reason about the history of dynamic networks via clustering is guided by the problem of node labeling according to their arrival order when {\em only the structure} of the final snapshot of the network is provided. The availability of merely structure means that either we are given an unlabeled graph or the current node labels do not present any historical information.
As it turns out, in many real-world networks and graph models, it is impossible to find a complete order of arrival of nodes due to a large number of symmetries inherent in the graph \cite{luczak2016asymmetry,turowski2018allerton}. Figure~\ref{fig:intro_example} shows an example.
In such cases, it is essential to classify nodes that are indistinguishable themselves in terms of arrival order into clusters $\{ \C_i\}$.
Furthermore, the formed clusters also will be ordered as $\C_1 \prec \C_2 \prec \cdots $ so that for any $i<j$, all the nodes in the cluster $\C_i$ are estimated to be arrived earlier than all the nodes in the cluster $\C_j$, and all the nodes inside each cluster are considered to be identical in arrival order. We call such a clustering scheme as {\em temporal ordered clustering}.

\begin{figure}[!htb]
    \centering
\includegraphics[scale=0.25]{./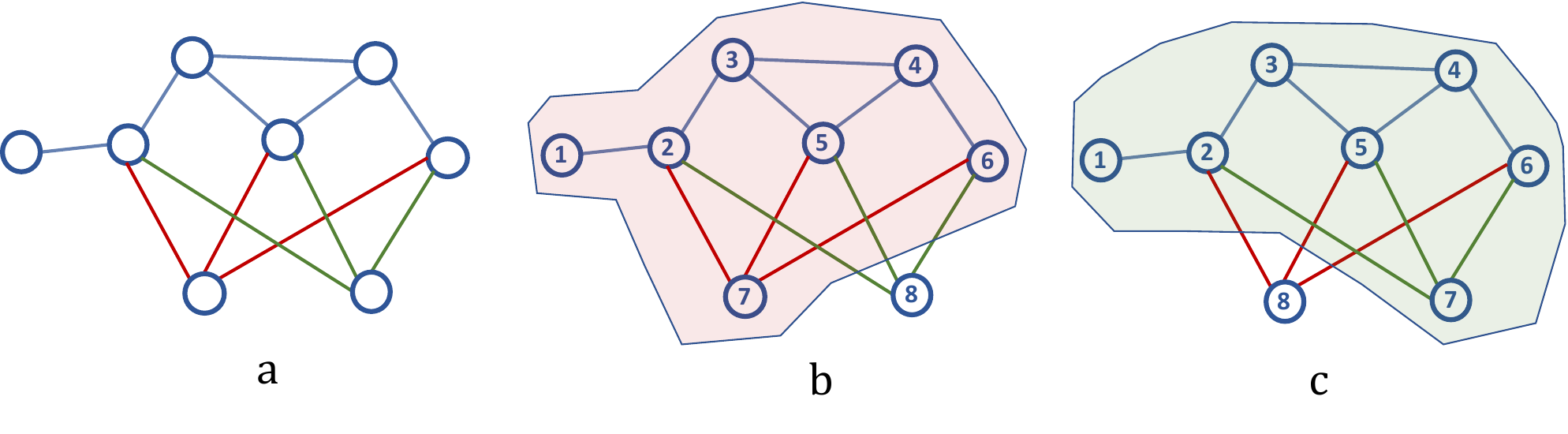}
\vspace*{-1 em}
\caption{\small Example showing how temporal clustering arises: a) the input graph without labels. b) and c) arbitrary labellings of arrival order with $1$ representing the earliest arrival and $8$ for the latest arrival. In b) and c), the last two arrived nodes $7$ and $8$ have the same set of neighbors.
If we simulate the process of evolution starting from node $1$, we observe that graphs in b) and c) at time $7$ (i.e., with nodes $1-7$) are identical.
Thus, nodes $7$ and $8$ in b) and c) are indistinguishable as to which arrived early between them (this observation holds for any labeling on the input graph), and the nodes behind these two labels are part of a temporal cluster.
}
    \label{fig:intro_example}
\end{figure}

Temporal ordered clustering is related to many applications in practice. For example in online social networks, it can be useful to disseminate specific information or advertisements targeted at nodes that arrived around the same time. In biological networks, it identifies the evolution of biomolecules in the network and helps in predicting early proteins that are known to be preferentially implicated in cancers and other diseases~\cite{srivastava2010amphimedon}. In rumor or epidemic networks, temporal ordered clustering can assist in identifying the sources and carriers of false information.

\xhdr{Our contributions}
\begin{itemize}[leftmargin= 2ex,topsep=1 ex]
\item We provide a general framework and derive an optimization problem for finding temporal ordered clusters in dynamic networks when only the final snapshot of its evolution is provided.
Due to high computational complexity involved in solving it, we reformulate the problem in terms of partial orders -- for any node pairs $(u, v)$, a partial order $\sigma$ defines an order $u <_{\sigma} v$ in which node $u$ is specified to be arrived earlier than node $v$.
Such a partial order naturally translates into clusters of nodes and introduces an order among them.
Both the optimization problems depend on the knowledge of the probabilistic evolution of the graph model and the probability that any node $u$ is older than any other node $v$, denoted as $p_{u,v}$. We then design a {\it sequential importance sampling} algorithm to estimate $p_{u,v}$ for any general graph model, and prove its convergence. The solution to a linear programming relaxation of the original optimization problem, with coefficients as estimated $p_{u,v}$, presents an upper bound on the clustering quality.

\item In the case of large networks, when the complexity for solving the original optimization is higher, approximate solutions, which directly make use of the estimated $p_{u,v}$s, are developed. Moreover, when some information about the node-pair orders are available, we develop semi-supervised techniques that exploit the graph structure to improve estimation precision. We observe that the use of semi-supervised learning enhances the estimated values of $p_{u,v}$ quickly even with a small percentage of labeled data.

\item In the second part of the paper, as an application of the proposed general technique, we focus on {\it duplication-divergence} or vertex copying dynamic network model (DD-model) in which, informally, a new node copies the edges of a randomly selected existing node and retains them with a certain probability, and also makes random connections to the remaining nodes (see Section~\ref{subsec:dd-model} for details).
The DD-model poses unique challenges for temporal ordered clustering in comparison with other graph models because of the features listed below:
\begin{itemize}
	\item {\em Non-equiprobable large number of permutations}: In many of the graph models including the preferential attachment and Erd\H{o}s-R\'enyi graph models, all the feasible permutations of the same structure representing node arrival orders are equally likely \cite{luczak2016asymmetry}. Later in the paper, we show with a counter example that this is not the case in the DD-model.
	In other words, unlike in our previous work \cite{sreedharan2019inferring}, we do not assume the isomorphic graphs that have positive probability under the graph model have the same probability.
	Moreover, in the DD-model, all the permutations of node labels with $n$ letters are valid unlike some models like preferential attachment model and hence the effective space of total orderings is $n!$. Thus the DD-model stands as corner case in the problem of node arrival order inference.

	\item {\em Large number of symmetry}: We provide evidence of a large number of automorphisms in a duplication-divergence graph, whereas it is known that  Erd\H{o}s-R\'enyi and preferential attachment graphs are {\em asymmetric} (when the automorphism group contains only the identity permutation) with high probability~\cite{kim2002asymmetry,luczak2016asymmetry}.

	\item {\em Ineffectiveness of degree-based techniques}: In some models (including preferential attachment model), the oldest nodes have larger expected degrees than the youngest nodes over time, with high probability. But it is known that in the DD-model the average degree does not exhibit such a consistent trend \cite{psexpected,frieze2020}. Thus any method based on degrees is bound to fail in the DD-model.
\end{itemize}
\end{itemize}

\xhdr{Prior related work}
Graph clustering is a well-studied problem which, in general, follows two main approaches: 1) define a similarity metric between node pairs, and choose clusters so as to maximize similarity among the nodes inside a cluster and minimize similarity between nodes in different clusters; 2) identify subgraphs within the input graph that reach a certain value of fitness measure, usually based on subgraph density, conductance, normalized cut or sparse cut \cite{SCHAEFFER200727}.
Many of the clustering techniques on static graphs have been extended to dynamic graphs, where primarily the aim was to study the evolution of fitness or similarity based clusters~\cite{loukas2018spectrally,liu2018global,gorke2010modularity, greene2010tracking}.

The temporal ordered clustering or partial order inference considered in this paper poses a very different problem in contrast to the classical formulation.
The optimization criterion for temporal ordered clustering introduces a fresh look taking into account the graph model and its temporal behavior (see Section~\ref{sec:formln}).
The main aim of our clustering formulation is to characterize the inherent limits and to develop estimation algorithms for recovering the history of a dynamic network.
The nodes inside our clusters are indistinguishable in terms of their arrival order due to symmetries in the input graph and there exists a hierarchy or order among the clusters with respect to graph evolution.

Previous works on semi-supervised clustering methods for data represented as vectors \cite{bair2013semi, basu2002semi} and their extensions to graphs \cite{kulis2009semi} focus mainly on using the labeled nodes to define clusters and their centroids. However, in temporal ordered clustering, the labeled nodes need not fully represent all the clusters, and they are used to reduce the complexity of estimation of coefficients of the associated linear programming (by restricting the sampling distribution of importance sampling, see Section~\ref{subsec:sup_soln})

Node arrival order in the DD-model has been studied in \cite{li2013maximum} and \cite{navlakha2011network}, and the references therein.
Most of the prior works focus on getting the complete arrival order of nodes (total order), but it turns out that it becomes nearly impossible due to their symmetries \cite{luczak2016asymmetry,turowski2018allerton}.
Instead of total order, in this work we focus on deriving an optimal partial order of nodes of nodes (see Section~\ref{sec:prob_formulation}).
Our methods are general and are applicable to a wide class of graph models, unlike our recent work \cite{sreedharan2019inferring} where the methods were specific to the preferential attachment model and not extendable.

A preliminary version of this paper is appeared in \cite{turowski2020temporal}.


\section{Problem Formulation}
\label{sec:prob_formulation}
Let $H_n$ be the observed undirected and unweighted graph of $n$ nodes with $V(H_n)$ being the set of vertices and $E(H_n)$ being the set of edges. The graph $H_n$ is a result of evolution over time, starting from a seed graph $H_{n_0}$ with $n_0$ nodes. At a time instant $k$, when a new node appears, a set of new edges adjacent to the new node is added, and the graph $H_k$ will evolve into $H_{k+1}$.
Since the change in graph structure occurs only when a new node is added, assuming the addition of a node as a time epoch, $H_n$ also represents graph at time epoch $n$. The time epoch $n_0$ denotes the creation of the seed graph $G_{n_0}$\footnote{In the rest of the paper, we omit conditioning on the given $G_{n_0}$ in all the expressions for the sake of brevity, if it is clear from the context.}.

Given only the snapshot of the dynamic graph $H_n$ at time $n$, we usually do not know the time or order of arrivals of nodes. Essentially, our goal is to label each node with a number $i$, $1 \leq i \leq K$, such that all the nodes labeled by $i$ arrived before nodes with labels $j$ where $j> i$. The number of labels (clusters) $K$ is unknown before and is a part of the optimal clustering formulation.
The arrival of a new node and the strategy it uses to choose the existing nodes to make connections depend on the graph generation model. We thus express the above problem in the following way.
Let $G_n$ be a graph drawn from a dynamic random graph model $\cG_n$ on $n$ vertices in which
nodes are labeled as $[n] = \{1, 2, \ldots, n\}$ according to their arrival, i.e., node
$j$ was the $j$th node to arrive. Let $G_n$ evolve from the seed graph $G_{n_0}$.
To model the lack of knowledge of the original labels, we subject
the nodes to a permutation $\bpi$ drawn uniformly at random from the symmetric group on $n$ letters $S_n$, and we are given the graph $H_n:=\bpi(G_n)$; that is, the nodes of $G_n$ are randomly \emph{relabeled}. We also use the notation $\cH_n$ to denote the random graph behind $H_n$.
Our original goal is to infer the arrival order in $G_n$ after
observing $H_n$, i.e., to find $\bpi^{-1}$. The permutation
$\bpi^{-1}$ gives the true arrival order
of the nodes of the given graph.

Instead of putting a constraint on recovering the whole permutation $\bpi^{-1}$ or equivalently $K=n$ labels, we resort to strict (irreflexive) partial orders.
For a partial order $\sigma$, a relation $u <_\sigma v$ means that node $u$ is older than node $v$ according to the ordering $\sigma$..

\subsection{Relation between temporal ordered clusters and partial order set}
\label{subsec:relation_cluster_partial_order}
Every partially ordered set can be represented by a clustering $\{\C_i\}$ as follows. A strict partially ordered set can be represented initially by a directed acyclic graph (DAG) with nodes as the nodes in the graph $H_n$ and directed edges as given by the partial order $\sigma$: an edge from $v$ to $u$ exists when $u <_\sigma v$. Then taking the transitive closure of this DAG will result in the DAG of the partial order set $\sigma$. Now, all the nodes with in-degree $0$ in the DAG will be part of cluster $C_K$ and the set of nodes with all the in-edges coming from nodes in $C_{K}$ will form cluster $C_{K-1}$. This process repeats until we get $C_1$. The number of clusters $K$ is not defined before but found from the DAG structure. Unlike the classical clustering, these clusters are ordered such that $\C_1 \prec \C_2 \ldots \prec \C_K$, where the relation $\C_i \prec \C_j, i<j$ is defined as all the nodes inside the cluster $C_i$ are estimated to be arrived earlier than all the nodes in the cluster $\C_j$, and all the nodes inside each cluster are considered to be identical in arrival order.
We note here that not all partial orders result in a DAG that is weakly connected.
If there are multiple components in the DAG corresponding to a partial order, each of them will give independent clustering. It might be due to the nodes in these separate components of the DAG are developed independently during evolution.
Moreover, if there are nodes that are not part of any comparison in the partial order, we label them as unclassified.

In the following Section~\ref{subsec:optzn_cluster}, we formulate an optimization problem for the clusters and find that the time complexity of its solution is $n^5$-times larger than that of the solution of the optimization problem of partial orders in Section~\ref{subsec:optzn_partial_orders}. Hence in this paper, we focus only on the temporal-ordered clusters derived from the partial order.

We define an estimator $\phi$ of the temporal ordered clustering\footnote{From now on, we use the terms node arrival order inferencing and temporal ordered clustering interchangeably in the paper} as a function $\phi$ from the set of all labeled graphs on $n$ vertices to the set of all partial orders on nodes $1,\ldots,n$.

We consider estimators based on {\em unsupervised} and {\em semi-supervised} learning paradigms:
\begin{itemize}[leftmargin= 2ex,topsep=0 ex]
\item Unsupervised: In this case, the estimator does not have access to any information of the node arrival orders. Its results will be based only on the assumption that the graph model fits well the real-world network under consideration. In Section~\ref{sec:formln} we formulate an optimization problem for unsupervised learning and in Section~\ref{sec:approx_opt_soln} we provide approximate solutions of the optimization.

\item Semi-supervised: In some of the real-world networks, partial information of the order of nodes is available -  for some of the node pairs $u,v$, it is revealed to the estimator that node $u$ is arrived earlier than node $v$. Such node pairs are termed as {\em perfect pairs}. Taking this information into account would help the estimator that is initially based on fixed graph model to adapt to the real-data. The semi-supervised estimators introduced in Section~\ref{sec:approx_opt_soln} learn the partial orders in the data without violating the perfect pairs.
\end{itemize}

\subsection{Measures for evaluating partial order}
For a partial
order $\sigma$, let $K(\sigma)$ denote the number of pairs $(u, v)$
that are comparable under $\sigma$: i.e.,
$K(\sigma) = | \{ (u, v) ~:~ u <_{\sigma} v \}|$,
where $|K(\sigma)| \leq \binom{n}{2}$.
\parsec
\emph{Density}: the density of a partial order $\sigma$ is simply the number
of comparable pairs, normalized by the total possible number, $\binom{n}{2}$.  That is,
$\delta(\sigma)
= {K(\sigma)}/{\binom{n}{2}}. 
$ 
Note that $\delta(\sigma) \in [0, 1]$.  Then the density of a partial order estimator
$\phi$ is simply its minimum possible density $\delta(\phi)
= \min_{H_n}[\delta(\phi(H_n))]$.

\vspace{0.25em}
\noindent \emph{Precision}: it measures the
expected fraction of \emph{correct} pairs out of all pairs that
are guessed by the partial order. That is
\[
\theta(\sigma)
= \E \left[
\frac{1}{K(\sigma)} | \{ u, v \in [n]\colon u <_{\sigma} v, \bpi^{-1}(u) < \bpi^{-1}(v) \} |
\right].
\]
For an estimator $\phi$, we also denote by $\theta(\phi)$ the quantity $\displaystyle\E[\theta(\phi(\bpi(\cG_n)))]$.
We note here that the typical graph clustering performance measures like Silhouette index and Davies-Bouldin index do not find useful in our set up since the distance measure in our case is difficult to capture quantitatively and is purely based on indistinguishability due to symmetries and arrival order of nodes.

\section{Solving the Optimization Problem}
\label{sec:formln}
The precision of a given estimator $\phi$ can be written in the form of a sum over all graphs $H_n$:
\begin{flalign*}
&\theta(\phi)
= \sum_{H_n} \Pr[\bpi(\cG_n)=H_n] \frac{1}{K(\phi(H_n))} \\
& \times \E \left[ |\{ u,v\in [n] \colon u <_{\phi(H_n)} v, \bpi^{-1}(u) < \bpi^{-1}(v) \}| \Big| \bpi(\cG_n)=H_n \right]\!.
\end{flalign*}
Here $\bpi$ and $\cG_n$ are the random quantities in the conditional expectation.
We formulate the optimal estimator as the one that gives maximum precision for a given minimum density.
For an estimator to be optimal, it is then sufficient to choose,
for each $H_n$, a partial order $\phi(H_n)$ that maximizes
\begin{align*}
&J_{\epsilon}(\phi) := {K(\phi(H_n))}^{-1} \\
& \times {\E \left[ |\{ u,v\in [n] \colon u <_{\phi(H_n)} v, \bpi^{-1}(u) < \bpi^{-1}(v) \}|\Big| \bpi(\cG_n)=H_n \right]\!.} \nn
\end{align*}
subject to the density constraint $\delta(\phi(H_n)) = K(\phi(H_n))/\binom{n}{2} \ge \varepsilon$, which says that we must have a certain minimum density of comparable pairs (here, $\epsilon \in [0, 1]$ is a parameter of the problem).

In the the following first two subsections, we formulate the above optimization problem for two cases: when the estimator outputs the clusters and when it outputs the partial order. Each of these optimizations add a set of extra constraints to the original problem.

Let
\begin{align}
p_{u,v}(H_n) & := \Pr[\bpi^{-1}(u) < \bpi^{-1}(v) | \bpi(\cG_n)=H_n]
\end{align}
be the probability that $u$ is arrived before $v$ given the relabeled graph $H_n$.
The probability $p_{u,v}(H_n)$ turns out to be a critical quantity that serves as the coefficient in the linear programming approximations of the optimization problems and its estimation is explained in the last subsection of this section\footnote{We drop the dependence of $H_n$ in $p_{u,v}(H_n)$ and $P$ if it is clear from the context.}.

\subsection{Integer programming formulation for clusters}
\label{subsec:optzn_cluster}
In this subsection, we restrict our optimization to linear cluster estimators, where the clusters are arranged in a total (linear) order.

To accomplish this optimization, we introduce, for each vertex $v$, a vector $\vec{x}_v = (x_{v,1}, \ldots, x_{v,n})$, where $x_{v,i} = 1$ encodes the fact that node $v$ is placed in cluster $i$.

Then $J_{\epsilon}$ can be written in terms of integer programming (IP) formulation as
\begin{align}
\sum_{1 \leq u \neq v\leq n} \sum_{1 \leq i < j  \leq n} p_{u,v}(H_n)\,  \dfrac{x_{u,i} x_{v,j}}{\displaystyle\sum_{1 \leq k < l \leq n} \sum_{1 \leq w \neq w' \leq n} x_{w,k} x_{w',l}},
\label{eq:IP-cluster}
\end{align}
subject to the basic constraints\footnote{Let the nodes in $H_n$ take unique labels from the set $[n]= \{1,2,\ldots,n \}$ (the original random graph $\cG_n$ is assumed to be labeled from $[n]$, with label $i$ indicating $i$th arrival node).}
\[ \sum_{j=1}^n x_{v,j} =1,  \forall j \in [n] \quad \& \quad  x_{v, j} \in \{0,1\}, \forall v \in [n], \forall j \in [n].\]

\noindent We additionally have the following density constraint for a given $\epsilon$:
\[ \sum_{1 \leq k < l \leq n} \sum_{1 \leq w \neq w' \leq n} x_{w,k} x_{w',l} \geq \epsilon \binom{n}{2}. \]
Each term of the form $x_{u,i} x_{v,j}$ becomes one only when the node $u$ is classified into a cluster $i$ that has lower precedence than node $v$'s cluster $j$ ($i < j$). This corresponds to the event $u <_{\phi(H_n)} v$ with $\phi$ as given by the clusters. The probability $p_{u,v}$ appears because of the event $\bpi^{-1}(u) < \bpi^{-1}(v)$ inside the expectation in $J_{\epsilon}$. The denominator in \eqref{eq:IP-cluster} corresponds to $K(\phi(H_n))$.

That is, we have a quadratic rational integer program with linear basic constraints and a quadratic constraint introduced by the minimum density. We show now how to convert our program to a linear rational integer program with linear constraints.

We define new variables $z_{u,i,v,j} = x_{u,i} x_{v,j}$ , for $u, v, i, j \in [n]$.
We can then eliminate the rational part of the integer program using the substitution
\begin{gather*}
  s = \left(\sum_{\substack{1 \leq k < l \leq n \\ 1 \leq w \neq w' \leq n}} z_{w,k, w',l}\right)^{-1} \text{ and } z'_{u,i,v,j} = s \, z_{u,i,v,j}.
\end{gather*}
With the above change of variables, the domain of $z'$ is restricted to $\{0,s\}$. The density constraint
\[\sum_{\substack{1 \leq u \neq v\leq n \\ 1 \leq i < j  \leq n}} z_{u,i,v,j} \geq \epsilon \binom{n}{2} \implies s \leq \frac{1}{\epsilon \binom{n}{2}}.\]
Now we transform the integer program to a linear program by assuming $z'$ takes continuous values with domain $[0,1/\epsilon \binom{n}{2}]$.
We call the resulting optimization as {\bf LP-clusters}.

\begin{table}[!ht]
\small
\vspace{0 em}
\renewcommand{\arraystretch}{0.5}
\begin{tabular}{b{0.21\textwidth}|b{0.24\textwidth}}
\multicolumn{1}{c}{\textbf{Original integer program}} \vspace{1 em}
 & \multicolumn{1}{c}{\textbf{LP approximation}} \\
\multicolumn{1}{c|}{\hspace{-0.1 em}$\max\limits_{z}\frac{\displaystyle\sum\limits_{\substack{1 \leq u \neq v\leq n \\ 1 \leq i < j  \leq n}} p_{u,v}(H_n)\, z_{u,i,v,j}}{\displaystyle\sum\limits_{\substack{1 \leq k < l \leq n \\ 1 \leq w \neq w' \leq n}} z_{w,k, w',l}}$} & \multicolumn{1}{c}{\hspace{-1.8 em}
$\max\limits_{z'} \displaystyle\sum\limits_{\substack{1 \leq u \neq v\leq n \\ 1 \leq i < j  \leq n}} p_{u,v}(H_n) z'_{u,i,v,j}$} \\
subject to & subject to \\
\, $\sbullet[0.75]~ z_{u,i,v,j} \in \{0, 1\}$
& \, $\sbullet[0.75]~ z'_{u,i,v,j} \in [0, 1/\epsilon \binom{n}{2}]$\\
\multicolumn{1}{r|}{ $\forall u,i,v,j \in [n]$} & \multicolumn{1}{r}{$\forall u, i, v, j \in [n]$} \\
\, $\sbullet[0.75]~\displaystyle \sum_{\substack{1 \leq u \neq v\leq n \\ 1 \leq i < j  \leq n}} z_{u,i,v,j} \geq \epsilon \binom{n}{2}$
& \, $\sbullet[0.75]~\displaystyle\sum_{\substack{1 \leq u \neq v\leq n \\ 1 \leq i < j  \leq n}} z'_{u,i,v,j} = 1$\\
\, $\sbullet[0.75]~\displaystyle \sum_{i \in [n]}z_{u,i, u,i} = 1$, $\forall u \in [n]$
& \, $\sbullet[0.75]~ \displaystyle\sum_{i \in [n]}z'_{u,i, u,i} \leq 1/\epsilon \binom{n}{2}$,  $\forall u \in [n]$ \\
\, $\sbullet[0.75]~z_{u,i,v,j} = z_{v,j,u,i}$
& \, $\sbullet[0.75]~z'_{u,i,v,j} = z'_{v,j,u,i}$ \\
\multicolumn{1}{r|}{ $\forall u,i,v,j \in [n]$} & \multicolumn{1}{r}{$\forall u,i,v,j \in [n]$} \\
\, $\sbullet[0.75]~\displaystyle\sum_{i \in [n]} z_{u,i,v,j} = z_{v,j,v,j}$,
& \, $\sbullet[0.75]~\displaystyle\sum_{i \in [n]} z'_{u,i,v,j} = z'_{v,j,v,j}$, \\
\multicolumn{1}{r|}{$\forall u,v,j \in [n]$} & \multicolumn{1}{r}{$\forall u,v,j  \in [n]$}
\end{tabular}
\end{table}

The first three constraints are direct translation of the constraints in \eqref{eq:IP-cluster}, and the last two comes from the substitution $z_{u,i,v,j} = x_{u,i} x_{v,j}$.

\xhdr{Complexity analysis}
The LP approximation presented above has $\Theta(n^4)$ decision variables and $\Theta(n^4)$ constraints.
Therefore the complexity of solving this optimization, without taking into account the complexity of estimating $p_{u,v}$, will be of the order of $n^{12}$ ($d^2c$ if $d$ is the number of decision variables and $c$ is the number of constraints \cite[Section 1.2.2]{boyd2004convex}).

The numerical experiments of the optimization in terms of clusters are presented later in Section~\ref{subsec:synthetic_data_numerical}. We also provide comparisons showing the formulation in terms of partial orders given in the next subsection computes much faster, yet outputs estimates with precision closer to that of cluster optimization.

\subsection{Integer programming formulation for partial orders}
\label{subsec:optzn_partial_orders}
In this subsection, we derive the optimal partial order among the nodes for the arrival order inference problem, extending some results from our recent work in \cite{sreedharan2019inferring}.

We now represent the optimization problem with $J_{\epsilon}(\phi)$ as an integer program of partial order.
For an estimator $\phi$, we define a binary variable $y_{u,v}$ for each ordered pair $(u,v)$ as $y_{u,v}=1$ when $u <_{\phi(H_n)} v$. Note that $y_{u,v}=0$ means either $u >_{\phi(H_n)} v$ or the pair $(u,v)$ is incomparable in the partial order $\phi(H_n)$.


In the following, we write the optimization in two forms: the original integer program (left) and the linear programming approximation (right). The objective functions of both the formulations are equivalent to $J_\epsilon(\phi)$.
The constraints of the optimizations correspond to domain restriction, minimum density, and partial order constraints --  antisymmetry and transitivity respectively.
To use a linear programming approximation, we first convert the rational integer program into an equivalent truly integer program.
With the substitution $s = 1/\sum_{1 \leq u \neq v \leq n} y_{u,v}$, and $y'_{u,v} = s y_{u,v}$, the objective function is rewritten as a linear function of the normalized variables. These programs are equivalent if $y'_{u, v} \in \{0, s\}$, $s\leq {1}/{\varepsilon \binom{n}{2}}$. For the LP relaxation, we assume $y'_{u,v}$ as $\left[0,{1}/{\varepsilon \binom{n}{2}}\right]$. We call the LP in this subsection as the {\bf LP-partial-order}.

\begin{table}[ht]
\small
\vspace{0 em}
\begin{tabular}{b{0.21\textwidth}|b{0.23\textwidth}}
\multicolumn{1}{c}{\textbf{Original integer program}} \vspace{1 em}
 & \multicolumn{1}{c}{\textbf{LP approximation}} \\
\multicolumn{1}{c|}{ \hspace{-0.1 em}$\displaystyle \max_{y} \frac{\sum_{1 \leq u \neq v \leq n } p_{u,v}(H_n) y_{u,v} }{\sum_{1 \leq u \neq v \leq n} y_{u,v}}$} & \multicolumn{1}{c}{\hspace{-2.5 em}
$\displaystyle \max_{y'} \sum_{1 \le u \neq v \le n} p_{u,v}(H_n) y'_{u,v}$} \\
subject to & subject to \\
\Hquad $\sbullet[0.75]~ y_{u,v} \in \{0, 1\}$, $\forall u, v \in [n]$ & \Hquad $\sbullet[0.75]~y'_{u,v} \in [0,1/\epsilon \binom{n}{2}]$, $\forall u,v \in [n]$ \\
\Hquad $\sbullet[0.75]~\displaystyle \sum _{1 \leq u \neq v \leq n} y_{u,v} \geq \epsilon \binom{n}{2}$ & \Hquad $\sbullet[0.75]~\displaystyle \sum_{1 \le u \neq v \le n} y'_{u,v} = 1$ \\
\Hquad $\sbullet[0.75]~y_{u,v} + y_{v,u} \leq 1$, $\forall u,v \in [n]$ & \Hquad $\sbullet[0.75]~y'_{u,v} + y'_{v,u} \leq 1/\epsilon \binom{n}{2}$,  \\
\multicolumn{1}{r|}{ } & \multicolumn{1}{r}{$\forall u,v\in [n]$}\\
\Hquad $\sbullet[0.75]~y_{u,v} + y_{v,w} - y_{u,w} \le 1$, & \Hquad $\sbullet[0.75]~y'_{u,v} + y'_{v,w} - y'_{u,w} \leq 1/\epsilon \binom{n}{2}$, \\
\multicolumn{1}{r|}{$\forall u, v, w \in [n]$} & \multicolumn{1}{r}{$\forall u,v,w  \in [n]$}
\end{tabular}
\end{table}

The above integer program and LP-partial-order formulation is different from the LP-clusters in many ways. The idea of LP-partial-order is to relax the formulation of LP-clusters by focusing on the underlying partial order of clusters, rather than clusters itself. This simplifies the objective function, though it brings additional partial order constraints into the optimization. After finding the optimal partial order, we can derive the ordered clusters from it using the peeling technique in Section~\ref{subsec:relation_cluster_partial_order}. We note here that this may not need result in unique clusters. Many partial orders can have the same the cluster structure, especially when the DAG corresponding to the partial order contains multiple components.

The next lemma bounds the effect of approximating the coefficients $p_{u,v}$ on the optimal value of the integer program.
\begin{lemma}
    Consider the integer program whose objective function is given by
    \[\hat{J}_{\epsilon,\lambda}(\phi) = \frac{\sum_{1 \leq u < v \leq n}
        \hat{p}_{u,v}(H_n) y_{u,v} }{\sum_{1\leq u \neq v \leq n} y_{u,v}},\]
    with the same constraints as in the original integer program.  Assume $p_{u,v}(H_n)$ can be approximated with $|\hat{p}_{u,v}(H_n)- p_{u,v}(H_n)| \leq \lambda$ uniformly for all $u,v$.
    Let $\phi_*$ and $\hat{\phi}_*$
    denote optimal points for the original and modified integer programs, respectively.
    Then
    $ 
    |\hat{J}_{\epsilon,\lambda}(\hat{\phi}_*) - J_{\epsilon}(\phi_*)|
    \leq 3\lambda,
    $ 
    for arbitrary $\lambda > 0$.
    \label{lemma:effect_perturbation_puv}
\end{lemma}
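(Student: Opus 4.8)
The plan is to reduce the whole statement to a single uniform per-point estimate and then combine it with the optimality of the two programs through the triangle inequality. For any feasible partial order $\phi$ (equivalently, any feasible $x$) write $N(\phi) = \sum_{u \neq v} p_{u,v}(H)\, x_{u,v}$, $\hat{N}(\phi) = \sum_{u \neq v} \hat{p}_{u,v}(H)\, x_{u,v}$, and $D(\phi) = \sum_{u \neq v} x_{u,v}$, so that $J_{\epsilon}(\phi) = N(\phi)/D(\phi)$ and $\hat{J}_{\epsilon,\lambda}(\phi) = \hat{N}(\phi)/D(\phi)$. The crucial structural observation is that the denominator $D(\phi)$ is \emph{identical} in both programs, since it involves none of the coefficients $p_{u,v}$, and that the density constraint forces $D(\phi) \geq \epsilon \binom{n}{2} > 0$, so no division by zero occurs and both objectives are well defined on the common feasible region.

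First I would establish the uniform per-point bound $|\hat{J}_{\epsilon,\lambda}(\phi) - J_{\epsilon}(\phi)| \leq \lambda$ for every feasible $\phi$. Because the two objectives share the same denominator, their difference is $\frac{1}{D(\phi)}\sum_{u \neq v} \bigl(\hat{p}_{u,v}(H) - p_{u,v}(H)\bigr) x_{u,v}$; bounding $|\hat{p}_{u,v} - p_{u,v}| \leq \lambda$ termwise and using $x_{u,v} \geq 0$ makes the numerator at most $\lambda\, D(\phi)$, which cancels against $D(\phi)$. The essential point is that the numerator perturbation and the denominator scale with the \emph{same} quantity $D(\phi)$, so the $\lambda$ bound holds no matter how many pairs are declared comparable.

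Next I would invoke optimality. Since $\hat{\phi}_*$ maximizes $\hat{J}_{\epsilon,\lambda}$ and $\phi_*$ maximizes $J_{\epsilon}$ over the same feasible set, we have $\hat{J}_{\epsilon,\lambda}(\hat{\phi}_*) \geq \hat{J}_{\epsilon,\lambda}(\phi_*)$ and $J_{\epsilon}(\phi_*) \geq J_{\epsilon}(\hat{\phi}_*)$. Then I decompose
\[
|\hat{J}_{\epsilon,\lambda}(\hat{\phi}_*) - J_{\epsilon}(\phi_*)| \leq |\hat{J}_{\epsilon,\lambda}(\hat{\phi}_*) - J_{\epsilon}(\hat{\phi}_*)| + |J_{\epsilon}(\hat{\phi}_*) - J_{\epsilon}(\phi_*)|.
\]
The first term is at most $\lambda$ by the per-point bound applied at $\hat{\phi}_*$. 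For the second term, optimality of $\phi_*$ gives $J_{\epsilon}(\hat{\phi}_*) \leq J_{\epsilon}(\phi_*)$, while chaining the per-point bound with optimality of $\hat{\phi}_*$ gives $J_{\epsilon}(\hat{\phi}_*) \geq \hat{J}_{\epsilon,\lambda}(\hat{\phi}_*) - \lambda \geq \hat{J}_{\epsilon,\lambda}(\phi_*) - \lambda \geq J_{\epsilon}(\phi_*) - 2\lambda$, so the second term is at most $2\lambda$. Adding the two contributions yields the claimed $3\lambda$.

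I do not anticipate a genuine obstacle; the only thing to handle carefully is the ratio form of the objective. A naive perturbation analysis of a quotient would leave a factor $1/D(\phi)$, hence a useless bound of order $1/(\epsilon \binom{n}{2})$; the entire argument hinges on the fact that the same $D(\phi)$ appears in both $J_{\epsilon}$ and $\hat{J}_{\epsilon,\lambda}$ and therefore cancels. I note in passing that the triangle-inequality split above is slightly wasteful: chaining the optimality inequalities directly (using $\hat{J}_{\epsilon,\lambda}(\hat{\phi}_*) \leq J_{\epsilon}(\hat{\phi}_*) + \lambda \leq J_{\epsilon}(\phi_*) + \lambda$ and $\hat{J}_{\epsilon,\lambda}(\hat{\phi}_*) \geq \hat{J}_{\epsilon,\lambda}(\phi_*) \geq J_{\epsilon}(\phi_*) - \lambda$) actually gives the sharper bound $\lambda$, but the stated $3\lambda$ is all that is required downstream and follows from the cleaner decomposition.
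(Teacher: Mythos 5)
Your proof is correct. Note that the paper itself does not spell out a proof of this lemma: it only remarks that the result is ``an extension of a similar lemma'' in \cite{sreedharan2019inferring}, where the perturbation is assumed multiplicative ($|\hat{p}_{u,v}/p_{u,v}-1|\leq\lambda$) rather than additive. Your argument --- the uniform per-point bound $|\hat{J}_{\epsilon,\lambda}(\phi)-J_{\epsilon}(\phi)|\leq\lambda$, made possible because the denominator $\sum_{u\neq v}x_{u,v}$ carries no coefficients and cancels exactly, followed by the standard chaining of the two optimality inequalities --- is precisely the natural way to carry out that extension, and in fact the additive assumption makes the cancellation cleaner than in the multiplicative setting. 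Two further points in your favor: you correctly treat the index-set discrepancy in the lemma statement (numerator written over $u<v$, denominator over $u\neq v$) as a typo and use the same index set as in the original objective $J_{\epsilon}$, which is the only reading under which the claim is the intended one; and your closing observation that direct chaining yields the sharper bound $\lambda$ is also correct --- the stated $3\lambda$ is simply not tight, which is harmless since Lemma~\ref{lemma:effect_perturbation_puv} is only used qualitatively downstream.
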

The proof of the above lemma is an extension of \cite[Lemma 5.1, Supplementary Material]{sreedharan2019inferring} -- we require a weaker assumption $|\hat{p}_{u,v}(H_n) - p_{u,v}(H_n)| \leq \lambda$ instead of $|\hat{p}_{u,v}(H_n)/p_{u,v}(H_n) - 1| \leq \lambda$ in \cite{sreedharan2019inferring}.
%

\xhdr{Complexity analysis and advantage over the cluster optimization} The LP approximation has $\Theta(n^2)$ decision variables and $\Theta(n^3)$ constraints (in the order of their appearance in the formulation). Thus computational complexity of the LP will of the order of $n^7$ (without taking into account the estimation complexity of $p_{u,v}$), which is much less than $n^{12}$ complexity of cluster optimization in the previous subsection. Later in Section~\ref{subsec:synthetic_data_numerical}, we provide numerical comparisons showing the formulation in terms of partial orders computes much faster, yet outputs estimates with precision closer to that of cluster optimization.

\subsection{Estimating coefficients using importance sampling}
We now discuss the importance sampling approach to estimate the coefficient $p_{u,v}$ that is needed to solve the optimization problem.
The following approach to estimate $p_{u,v}$ is applicable to any general graph model with Markovian evolution (conditioned on the present state of the graph, the new state is independent of the past state).

To estimate $p_{u,v}$, we classify dynamic graph models into two categories. Let $\Gamma(H_n)$ be the set of all feasible permutations $\sigma$ which generates a positive probability graph $\sigma(H_n)$ according to the distribution of the graph generation model.
\begin{enumerate}[leftmargin= 2ex,topsep=1 ex]
    \item[i).] {\bf Graph models with equiprobable isomorphic graphs.} Here, two isomorphic graphs have same probability under the graph model. Formally, consider a graph $G_n^{(1)}$ with $\P[\cG_n=G_n^{(1)}] > 0$ and another graph $G_n^{(2)}$, $G_n^{(2)} = \sigma(G_n^{(1)})$ with $\sigma \in \Gamma(G_n^{(1)})$, then the equiprobable condition can be stated as $\P[\cG_n=G_n^{(1)}] = \P[\cG_n=G_n^{(2)}]$.
    Our previous work in \cite{sreedharan2019inferring} focus on such a case and derives the following result.

\begin{lemma}[{\cite[Lemma~4.1 in Supplementary Information]{sreedharan2019inferring}}]
    For all $v, w \in [n]$ and graphs $H_n$,
    \begin{align}
    \MoveEqLeft{\P[\bpi^{-1}(v) < \bpi^{-1}(w)| \bpi(\cG_n) = H_n]}& \nn \\ &= \frac{|\sigma: \sigma^{-1} \in (H_n), \sigma^{-1}(v)< \sigma^{-1}(w)|}{|\Gamma(H_n)|}.
    \label{eq:lemma_equiprobable_nature_paper}
    \end{align}
\end{lemma}
Though the graph models with such a property are not common, it include preferential attachment and Erd\H{o}s-Renyi models. For preferential attachment model, we show in \cite{sreedharan2019inferring} that the estimation of right-hand side of \eqref{eq:lemma_equiprobable_nature_paper} deduces to finding the proportion of linear extensions $\sigma$ of a partial order (set of node pair orderings that hold with probability $1$) satisfying $\sigma^{-1}(v) < \sigma^{-1}(w)$.

    \item[ii).] {\bf Graph models with non-equiprobable isomorphic graphs.} Many of the graph models do not possess equiprobable ismorphic graphs property. In this work, we propose a new estimation scheme based on importance sampling that is applicable to such a case for any general graph model with Markovian evolution.
\end{enumerate}

We have, for $p_{u,v}:= \P({\bpi}^{-1}(u)<\bpi^{-1}(v)|\bpi(\cG_n)=H_n) $,
{\allowdisplaybreaks
\begin{align}
p_{u,v} & = \sum_{\substack{\sigma ~:~ \sigma^{-1} \in \Gamma(H_n) \\ \sigma^{-1}(u) < \sigma^{-1}(v)}} \P(\bpi = \sigma | \bpi(\cG_n) = H_n) \nn \\
& = \sum_{\substack{\sigma ~:~ \sigma^{-1} \in \Gamma(H_n) \\ \sigma^{-1}(u) < \sigma^{-1}(v)}} \frac{\P[\bpi = \sigma, \bpi(\cG_n) = H_n]}{\P[\bpi(\cG_n) = H_n]} \nn \\
&= \sum_{\substack{\sigma ~:~ \sigma^{-1} \in \Gamma(H_n) \\ \sigma^{-1}(u) < \sigma^{-1}(v)}} \frac{\P[\cG_n = \sigma^{-1}(H_n)] \P[\bpi = \sigma]}{\sum_{\sigma^{-1} \in \Gamma(H_n)} \P[\cG_n = \sigma^{-1}(H_n)] \P[\bpi=\sigma]} \nn \\
& = \frac{\sum_{\substack{\sigma ~:~ \sigma^{-1} \in \Gamma(H_n) \\ \sigma^{-1}(u) < \sigma^{-1}(v)}} \P[\cG_n = \sigma^{-1}(H_n)]}{\sum_{\sigma^{-1} \in \Gamma(H_n)} \P[\cG_n = \sigma^{-1}(H_n)]},
\label{eq:puv_expression}
\end{align}
}
\noindent where we used the fact that $\P[\bpi=\sigma] = 1/n!$ since it is independent of $H_n$.

We now derive an estimator for $p_{u,v}$ by approximating the numerator and denominator of right-hand side in \eqref{eq:puv_expression}.
The $p_{u,v}$ expression involves summing over permutations from the feasible set, i.e., $\sigma^{-1} \in \Gamma(H_n)$ ($\sigma^{-1}(H_n)$ gives a positive probable graph by the definition of $\Gamma(H_n)$). Since there are at most $n!$ permutations to check for feasibility, direct sampling from $\Gamma(H_n)$ is impossible in many cases.
However, we remark that each permutation $\sigma^{-1} \in \Gamma(H_n)$ invokes a chain structure when the graph has a Markovian evolution, as follows. Applying $\sigma^{-1}$ to $H_n$ is essentially relabeling of nodes in $H_n$ from $[n]$. Then starting from labeling a guess of the youngest node with $n$, by reverse engineering the Markovian evolution of the graph, to find the node with label $s <n$ we need to know only the node $s+1$ and the graph $H_{s+1}$.
Based on this observation, to estimate the denominator in right-hand side of \eqref{eq:puv_expression} we propose a {\em sequential} importance sampling strategy in Theorem~\ref{thm:Markov-chain-apprxn} that generalizes to any localized sampling distribution (probability to choose node $s$ after selecting node $s+1$) which meets a certain criteria.
This is directly extendable to estimating the numerator in \eqref{eq:puv_expression} too by putting an extra restriction to the sampled permutation.
Later Lemma~\ref{lemma:estimator_consistency} presents an estimator of $p_{u,v}$ using the technique derived in Theorem~\ref{thm:Markov-chain-apprxn}.


Let $\R_{H_n} \subseteq V(H_n)$ denote the set of {\em candidates} for youngest nodes at time $n$. The set $\R_{H_n}$ depends on the graph model. For example, in case of preferential attachment model, in which a new node attaches $m$ edges to the existing nodes with a probability distribution proportional to the degree of the existing node, $\R_{H_n}$ is the set of $m$-degree nodes.
We consider only permutations that do not change the initial graph $G_{n_0}$ labels.
For instance, if $G_{n_0}$ has three nodes and $G_n$ has $6$ nodes, we consider the following permutations (represented in cyclic notation): $(1)(2)(3)(456)$, $(1)(2)(3)(45)(6)$, $(1)(2)(3)(46)(5)$, $(1)(2)(3)(4)(56)$, $(1)(2)(3)(4)(5)(6)$.
Thus we define $H_{n_0}$ as $G_{n_0}$ itself. Since we assume $H_{n_0}$ is known, $p_{u,v}$ expression in \eqref{eq:puv_expression} has an additional conditioning of $H_{n_0}$.

Let $\delta(H_n,z_n)$ represent the graph in which the node $z_n \in \R_{H_n}$ is deleted from $H_n$. Then the graph sequence $\cH_n = H_n, \cH_{n-1} = \delta(H_n,z_n), \ldots, \cH_{n_0} = H_{n_0}$ forms a nonhomogeneous Markov chain -- nonhomogeneous because the state space $\{\mathbb{H}_s\}_{s\leq n}$ changes with $s$ and thus the transition probabilities too. Similarly $\cG_n, \cG_{n-1}, \ldots, \cG_{n_0}$ also make a Markov chain, and for a fixed permutation $\sigma$, $\sigma(G_n) = H_n$, both the above Markov chains have same transition probabilities. Let us also define the posterior probability of producing $H_n$ from $\delta(H_n,z_n)$ as
\begin{equation}
w(\delta(H_n,z_n),H_n) := \P[\cH_n = H_n|\cH_{n-1}=\delta(H_n,z_n)].
\label{eq:expression_posterior_w}
\end{equation}

The following theorem characterizes our estimator. For a Markov chain, let $\E_x$ denote the expectation with starting state $x$. Let $\mathbb{G}_s$ be the set of all labeled graphs on $s$ vertices.

\begin{theorem}[Sequential importance sampling]
\label{thm:MC}
    Consider a time-nonhomogeneous Markov chain $\cH_n = H_n, \cH_{n-1} = \delta(H_n,\bz_n), \ldots$, where $\bz_n \in \R_{H_n}, \bz_{n-1} \in \R_{H_{n-1}}, \ldots etc$ be the nodes removed randomly by the Markov chain and let its transition probability matrices be $\{Q_s=[q_s(F', F'')] \}_{s\leq n}$ for any two graphs $F' \in \mathbb{G}_s$ and $F'' \in \mathbb{G}_{s-1}$.
    Then we have
    \begin{align*}
    \hspace{0.7 em}
        \sum_{\mathclap{\sigma^{-1} \in \Gamma(H_n)}} \P[\cG_n = \sigma^{-1}(H_n)|H_{n_0}]  = \E_{\cH_n=H_n} \left[ \prod_{s \leq n}^{n_0+1} \frac{w(\delta(H_s,\bz_s),H_s)}{q_s(H_s,\delta(H_s,\bz_s))}\right]\!.
    \end{align*}
    \label{thm:Markov-chain-apprxn}
\end{theorem}

\begin{proof}
Now we have the following iterative expression for the denominator of $p_{u,v}$.
\begin{align}
&p_{u,v}^{\text{denom}} (H_n, H_{n_0}) : =  \sum_{\sigma^{-1} \in \Gamma(H_n)}
\P[\cG_n = \sigma^{-1}(H_n)|H_{n_0}] \\
& \!\!\!= \sum_{z_n \in \R_{H_n}} \sum_{{\sigma^{-1} \in \Gamma(H_n)}}\! \P[\cG_n = \sigma^{-1}(H_n), \cG_{n-1} = \sigma_1^{-1}(\delta(H_n,z_n))|H_{n_0}], \nn
\end{align}
where $\sigma_1 \in S_{n-1}$ is the permutation $\sigma$ with ``$z_n$ maps to $n$" removed. Now we can rewrite the above expression as
\begin{align}
\sum_{z_n \in \R_{H_n}} \sum_{\sigma^{-1} \in \Gamma(H_n)} & \P[\cG_n = \sigma^{-1}(H_n) | \cG_{n-1} = \sigma_1^{-1}(\delta(H_n,z_n))] \nn \\
& \times \P[\cG_{n-1}=\sigma_1^{-1}(\delta(H_n,z_n))|H_{n_0}].\nn
\end{align}
Note that $\P[\cG_n = \sigma^{-1}(H_n) | \cG_{n-1} = \sigma_1^{-1}(\delta(H_n,z_n))]$ for a fixed $\sigma$ (thus $\sigma_1$) is equivalent to $w(\delta(H_n,z_n),H_n)$. Now introducing a transition probability $\{Q_s=[q_s(i,j)] \}_{s\leq n}$ for the Markov chain $\{H_s\}_{s\leq n}$, and using importance sampling,
\begin{align}
p_{u,v}^{\text{denom}}(H_n, H_{n_0})
& = \sum_{z_n \in \R_{H_n} } \frac{w(\delta(H_n,z_n),H_n)}{q_n({H_n,\delta(H_n,z_n)})} q_n({H_n,\delta(H_n,z_n)}) \nn \\
& \quad\quad \times \sum_{\mathclap{\sigma^{-1} \in \Gamma(\delta(H_n,z_n))}} \P[\cG_{n-1} = \sigma^{-1}(\delta(H_n,z_n))|H_{n_0}]. \nn \\
& = \sum_{z_n \in \R_{H_n} } \frac{w(\delta(H_n,z_n),H_n)}{q_n({H_n,\delta(H_n,z_n)})} q_n({H_n,\delta(H_n,z_n)}) \nn \\
& \quad \times p_{u,v}^{\text{denom}} (\delta(H_n,z_n), H_{n_0}), \label{eq:sigma-sum_objective}
\end{align}
with $p_{u,v}^{\text{denom}} (H_{n_0}, H_{n_0}) =1$. Here $q_n({H_n,\delta(H_n,z_n)})$ is the transition probability to jump from $\cH_n = H_n$ to $\cH_{n-1} = \delta(H_n,z_n)$.
\begin{equation}
\text{Now let } \mu(H_{n}, H_{n_0})= \E_{\cH_n = H_n} \left[ \prod_{s \leq n}^{n_0+1} \frac{w(\delta(H_s,\bz_s),H_s)}{q_s(H_s,\delta(H_s,\bz_s))}\right] \nn
\end{equation}
Then we have,
\begin{align}
\mu(H_{n}, H_{n_0})
& = \sum_{z_n \in \R_{H_n} } \frac{w(\delta(H_n,z_n),H_n)}{q_n({H_n,\delta(H_n,z_n)})} q_n({H_n,\delta(H_n,z_n)}) \nonumber \\
& \qquad  \times \E_{\cH_{n-1}=\delta(H_b,z_n)} \left[ \prod_{s \leq n-1}^{n_0+1} \frac{w(\delta(H_s,\bz_s),H_s)}{q_s(H_s,\delta(H_s,\bz_s))} \right] \label{eq:Markov-chain-lemma_proof_step} \\
& = \sum_{z_n \in \R_{H_n} } \frac{w(\delta(H_n,z_n),H_n)}{q_n({H_n,\delta(H_n,z_n)})} q_n({H_n,\delta(H_n,z_n)}) \nn \\
& \qquad \times \mu(\delta(H_n,z_n), H_{n_0}),
\label{eq:Markov-chain-lemma_proof_last}
\end{align}
where \eqref{eq:Markov-chain-lemma_proof_step} follows from the Markov property.

Defining the function at $n_0$ as
\begin{equation}
\frac{w(\delta(H_{n_0},z_{n_0}),H_{n_0+1})}{q_{n_0}(H_{n_0+1},\delta(H_{n_0},z_{n_0}))} = 1, \text{ for any $z_{n_0}$} \nn
\label{eq:lemma-Markov-condn}
\end{equation}
we note here that the iteration \eqref{eq:Markov-chain-lemma_proof_last} of $\mu(H_{n}, H_{n_0})$ is identical to that of $p_{u,v}^{\text{denom}}$ in \eqref{eq:sigma-sum_objective}. This completes the proof.
\end{proof}

\begin{remark}
Note that unlike $q_s({H_s,\delta(H_s,z_s)})$, which is under our control to design a Markov chain, $w(\delta(H_s,z_s),H_s)$ is a well-defined fixed quantity (see \eqref{eq:w_defn}). The only constraint for the transition probability matrices $\{Q_s\}_{s \leq n}$ is that it should be chosen to be in agreement with the graph evolution such that the choices of jumps from $H_s$ to $H_{s-1}$ restricts to removing nodes from $\R_{H_s}$, and it depends on the graph model.
\end{remark}

\xhdr{$\bm p_{u,v}$~estimator} Now we can form the estimator for $p_{u,v}$ for a node pair $(u,v)$ as follows.
Let $\vec{z}^{{(k)}}$ be the vector denoting the sampled node sequence of the $k$th run of the Markov chain. It can either represent a vector notation as $\vec{z}^{(k)} = (z^{(k)}_n,z^{(k)}_{n-1},\ldots,z^{(k)}_{n_0+1})$ or take a function form $\vec{z}^{(k)}(s)$ denoting the new label of a vertex $s$ in $H_n$. We propose the following estimator and show that it has asymptotic consistency.

\begin{lemma}[Estimator and its consistency]
\label{lemma:estimator_consistency}
    Let the estimator of $p_{u,v}$, for all $u,v \in H_n$, formed from $k$ samples of the sequential importance sampling (see Theorem~\ref{thm:Markov-chain-apprxn}) be
\begin{equation}
\label{eq:MC_estimator}
\widehat{p}_{u, v}^{(k)} = \dfrac{\sum_{i=1}^{k} \bone_{\{\vec{z}^{(i)}(u)<\vec{z}^{(i)}(v)\}} \prod_{s \leq n}^{n_0+1}
    \dfrac{ w(\delta(H_s,\vec{z}^{(i)}_s),H_s)}{q_s(H_s,\delta(H_s,\vec{z}^{(i)}_s))}}{\sum_{i=1}^k \prod_{s \leq n}^{n_0+1} \dfrac{w(\delta(H_s,\vec{z}^{(i)}_s),H_s)}{q_s(H_s,\delta(H_s,\vec{z}^{(i)}_s))}}.
\end{equation}
Then $\widehat{p}_{u ,v}^{(k)} \to p_{u,v}$ a.s.\ as $k \to \infty$.
\end{lemma}
\begin{proof}
Using Theorem~\ref{thm:Markov-chain-apprxn} and based on the observation that the Markov sample paths in different runs are independent and identically distributed, the numerator and denominator in the right-hand side of \eqref{eq:MC_estimator} converge separately to that of \eqref{eq:puv_expression} by strong law of large numbers (in almost surely sense). Then by invoking continuous mapping theorem, we can prove that their ratio also converges to $p_{u,v}$ almost surely.
\end{proof}

Theorem~\ref{thm:Markov-chain-apprxn} and Lemma~\ref{lemma:estimator_consistency} provide us the flexibility and convenience to sample the permutations and estimate $p_{u,v}$ via a wide-range of sampling distributions. In the next section, we consider two such candidate distributions.
\begin{algorithm*}[!htb]
        \caption{Temporal Ordered Clustering: Semi-supervised}
        \label{alg:peeling}
        \hspace*{\algorithmicindent} \textbf{Input:} graph $H_n$, graph model $\cG_n$ description, training set of partial order $\sigma_{\text{train}}$, number of sample paths $k$\\
 \hspace*{\algorithmicindent} \textbf{Output:} Clusters $\C_1 \prec \C_2 \ldots \prec \C_K$

        \begin{algorithmic}[1]
                \Procedure{TemporalOrderedClustering}{}
                \For{$\ell$ from $1$ to $k$}
                \For{$s$ from $n$ down to $n_0$}
                \State Find $\mathcal{R}_{H_s}$ and $\NR_{H_s}$ by \eqref{eq:NR}
                \State $\R_{H_s} \gets \R_{H_s} \backslash \NR_{H_s}$
                \State Sample $z_s^{(\ell)}$ using a sampling method -- \texttt{local-unif-sampling} \eqref{eq:local-unif-sampling} or \texttt{high-prob-sampling} \eqref{eq:high-prob-sampling}
                \EndFor
                \EndFor
                \State Estimate $\hat{p}_{u,v}^{(k)}, \forall u, v \in V(H_n)$ using \eqref{eq:puv_expression}
                \State Use algorithm \texttt{sort-by-$p_{u,v}$-sum} or $p_{u,v}$-\texttt{threshold} to estimate clusters of nodes $\C_1, \C_2, \ldots , \C_K$
                \State \textbf{return} $\C_1, \C_2, \ldots , \C_K$
                \EndProcedure
        \end{algorithmic}
\label{alg:semi-supervised}
\end{algorithm*}

\section{Approximating optimal solution}
\label{sec:approx_opt_soln}
In this section, we describe our main algorithms for node arrival order recovery of a general graph model.

\xhdr{Algorithms for sampling the Markov chain} Finding the whole set of permutations and calculating the exact $p_{u,v}$ according to \eqref{eq:puv_expression} is of exponential complexity. With Theorem~\ref{thm:Markov-chain-apprxn} and eq.~\eqref{eq:MC_estimator}, we can approximate $p_{u,v}$ as the empirical average of Markov chain based sample paths. We try two different importance sampling distributions $\{Q_s\}_{s\leq n}$:
\begin{itemize}[leftmargin= 2ex,topsep=1 ex]
    \item \texttt{local-unif-sampling} with transition probabilities \begin{align}
        q_s(H_s,\delta(H_s,z_s)) = \frac{1}{|\R_{H_s}|}.
        \label{eq:local-unif-sampling}
    \end{align}
    \item \texttt{high-prob-sampling} forms the Markov chain with
    \begin{align}
        q_s(H_s,\delta(H_s,z_s)) = \frac{w(\delta(H_s,z_s),H_s)}{\sum_{u \in \R_{H_s}} w(\delta(H_s,u),H_s)}.
        \label{eq:high-prob-sampling}
    \end{align}
    The above transition probability corresponds to choosing the high probability paths.
\end{itemize}
Though the \texttt{high-prob-sampling} looks like the right approach to follow, as we show later in Section~\ref{subsec:synthetic_data_numerical}, it has much slower rate of convergence than \texttt{local-unif-sampling}. Moreover at each step $s$, without taking into account the specific graph model characteristics and using naive implementations, \texttt{high-prob-sampling} requires $O(n^2)$ computations -- $O(n)$ possibilities exist for immediate ancestor of $z_s$ in $\delta(H_s,z_s)$ which is needed for calculating the posterior probability $w$ and there are $O(n)$ possibilities for the sum in the denominator, while \texttt{local-unif-sampling} requires only $O(n)$ -- counting $|\R_{H_s}|$ by checking all the nodes. In some graph models (like the DD-model in Section~\ref{subsec:dd-model}), all the nodes in $H_s$ can be part of $\R_{H_s}$ with a positive probability, and \texttt{local-unif-sampling} will essentially become uniform sampling.


The \texttt{local-unif-sampling} can be further improved with the acceptance-rejection sampling technique: at a step $t$, randomly sample a node $u$ from $V(H_t)$ (instead of sampling from $\R_{H_t}$). Then calculate the probability that the node $u$ be the youngest node in the graph. If this probability is positive, we accept $u$ as $V_t$ and if it is zero, we randomly sample again from $V(H_t)$.

Now we assume that $p_{u,v}$ are estimated for all $u$ and $v$ to propose algorithms for temporal clustering. In fact, according to Lemma~\ref{lemma:effect_perturbation_puv}, we only need to have $\max_{u,v} |\hat{p}_{u,v} - p_{u,v}| \leq \lambda$ for a small $\lambda >0$. Thus for small $p_{u,v}$, $\hat{p}_{u,v}$ can be assumed to be zero. We can then use LP-partial-order in Section~\ref{subsec:optzn_partial_orders} with the estimated $p_{u,v}$ as the coefficients. Due to the huge computational complexity associated with the LP solution, we now propose the following unsupervised and semi-supervised approximation algorithms based on the estimates of $p_{u,v}$.

\subsection{Unsupervised solution}
\xhdr{\texttt{sort-by-$p_{u,v}$-sum} algorithm}
For this algorithm, we construct a new complete graph with the node set same as that of $H_n$ and edge weights as $p_{u,v}$.
Let us now define a metric $p_u:=\sum_{v \in V} p_{u,v}$ for every node $u$ of $H_n$. Since $p_{u,v}$ denotes the probability that node $u$ is older than node $v$, $p_u$ would give a high score when a node $u$ becomes the oldest node. Our ranking is then sorted order of the $p_u$ values.

Instead of total order, a partial order can be found by a simple binning over $p_u$ values: fix the bin size $|C|$ and group $|C|$ nodes in the sorted $p_u$ values into a cluster, and the process repeats for other clusters. If $|C|=1$, the algorithm will yield a total order.

\xhdr{$p_{u,v}$-\texttt{threshold} algorithm} Here, each of the estimated $p_{u,v}$'s is compared against a threshold $\tau$. Only the node pairs that are strictly greater than this condition are put into the estimator output partial order. Note that if $\tau =0.5$, we get a total order in virtually all relevant cases.

\subsection{Semi-supervised solution}
\label{subsec:sup_soln}
Suppose we have partial true data available. Let it be ordered in partial order as $\sigma_{\text{orig}} = \{(u,v)\}$, in which for the pair $(u,v)$, $u$ is the older than $v$. Let $\sigma_{\text{train}} \subset \sigma_{\text{orig}}$ be the training set and the let the test set be $\sigma_{\text{test}}:= \sigma_{\text{orig}} \backslash \sigma_{\text{train}}$.  Let $|\sigma_{\text{train}}| = \alpha |\sigma_{\text{orig}}|$ for some $0<\alpha <1$. With the knowledge of $\sigma_{\text{train}}$, we modify the estimation of $p_{u,v}$ as follows. The set of removable nodes $\R_{H_s}$ at each instant $s$ is modified to $\R_{H_s} \backslash \NR_{H_s}$, where $\NR_{H_s}$ is the set of nodes that can not be included in the removable nodes as it would violate the partial order of $\sigma_{\text{train}}$. It is defined as follows:
\begin{equation}
\NR_{H_s}:= \{u: (u,v)\in \sigma_{\text{train}}, u,v \in V(H_s)  \}, \forall n \geq s \geq n_0.
\label{eq:NR}
\end{equation}
After estimating $p_{u,v}$ with the redefined $\R_{H_s}$, we employ \texttt{sort-by-$p_{u,v}$-sum} algorithm or {$p_{u,v}$-\texttt{threshold}} algorithms to find partial order.
An example of $\R_{H_s}$ construction is shown in Figure~\ref{fig:supervised_learning_example}.
\begin{figure}
\centering
\includegraphics[scale=0.5]{./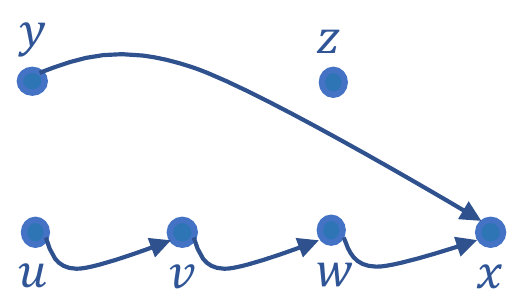}
\caption{Semi-supervised learning example DAG for $\sigma_{\text{train}} = \{(u, v), (v, w), (w, x), (y, w)$: $\NR_{H_s} = \{v, w, x\}$ and $\R_{H_s} = \{u, y, z\}$.}
\label{fig:supervised_learning_example}
\end{figure}

Algorithm~\ref{alg:semi-supervised} summarizes our semi-supervised algorithm. The algorithm will become unsupervised when there is no $\sigma_{\text{train}}$ and step-5 is removed.

\section{Temporal Ordered Clustering for Duplication-Divergence Model}
\subsection{Duplication-divergence model (DD-model)}
\label{subsec:dd-model}
We consider Sol\'e et al.\ definition of the DD-model \cite{pastor2003evolving}. It proceeds as follows. Given an undirected, simple seed graph $G_{n_0}$
on $n_0$ nodes and target number of nodes $n$,
the graph $G_{k+1}$ with $k+1$ nodes\footnote{The subscript $k$ with $G_k$ can also be interpreted as time instant $k$} evolves from the $G_k$ as follows:
first, a new vertex $v$ is added to $G_k$. Then the following steps are carried out:
\begin{itemize}[leftmargin= 2ex,topsep= 0 ex]
    \item Duplication: Select a node $u$ from $G_k$ uniformly at random.
    The node $v$ then makes connections to $\cN(u)$, the neighbor set of $u$.
    \item Divergence: Each of the newly made connections from $v$ to $\cN(u)$ are deleted with probability $1-p$.
    Furthermore, for all the nodes in $G_k$ to which $v$ is not connected,
    create an edge from it to $v$ independently    with probability $\frac{r}{k}$.
\end{itemize}
The above process is repeated until the number of nodes in the graph is equal to $n$.
We denote the graph $G_n$ generated from the DD-model with parameters $p$ and $r$, starting from seed graph $G_{n_0}$, by $G_n \sim \ddmodel(n, p, r, G_{n_0})$.

The posterior probability $w(\delta(H_s,z_s),H_s)$, which is defined in \eqref{eq:expression_posterior_w} and used in Theorem~\ref{thm:Markov-chain-apprxn} and \texttt{high-prob-sampling}, can be calculated for the DD-model as follows. For a node $z_s \in  \mathcal{R}_{H_s}$, we say a node $u$ is its parent if $u$ can be selected from the graph $\delta(H_s, z_s)$ for the duplication step when $z_s$ is added into $\delta(H_s, z_s)$.
The probability of having the node $u$ as the parent of $z_s \in \mathcal{R}_{H_s}$ in the DD-model is
\begin{align}
w&(\delta(H_s,z_s), u, H_s)  \nn \\
& = \frac{1}{s-1} p^{|\cN(z_s) \cap \cN(u)|} (1-p)^{|\cN(u) \backslash \cN(z_s)|} \nn \\
& \quad \left(\frac{r}{s-1} \right)^{|\cN(z_s) \backslash \cN(u)|} \left(1-\frac{r}{s-1} \right)^{(s-1)-|\cN(z_s) \cup \cN(u)|}
\label{eq:w_defn}
\end{align}
The above expression can be inferred directly from the definition of the DD-model as follows.  
We first pick $u$ as a parent node of $z_s$ with probability $\frac{1}{s-1}$. 
Then to calculate the probabilities of edge addition events retrospectively, we observe that edges from  $z_s$ to the nodes in the $\cN(z_s) \cap \cN(u)$ stayed with probability $p$, but edges to $\cN(u) \backslash \cN(z_s)$ were dismissed with probability $1-p$. 
We also have to take into account the edges between $z_s$ and vertices outside of $\cN(u)$ -- each were chosen independently with probability $\frac{r}{s-1}$ and they are exactly the edges from $z_s$ to $\cN(z_s) \backslash \cN(u)$.

Now $w(\delta(H_s,z_s),H_s) = \sum_{u \in \mathcal{P}_{H_s}(z_s) } w(\delta(H_s,z_s), u, H_s) $, where $\mathcal{P}_{H_s}(z_s)$ represents possible parents of $z_s$ in $H_s$.

Since all permutations have positive probability in this version of the model, we have $\R_{H_s} = V(H_s)$ and $\Gamma(H_s) = s!$.
%

\subsection{Greedy algorithms for clustering}
To form a comparison with algorithms proposed in Section~\ref{sec:approx_opt_soln}, we propose the following greedy unsupervised algorithms for the DD-model.

\xhdrLessSpace{\texttt{sort-by-degree}} The nodes are sorted by the degree and arranged into clusters $\{C_i\}_{i \geq 1}$. Cluster $C_1$ contains nodes with the largest degree.

\xhdrLessSpace{\texttt{peel-by-degree}} The nodes with the lowest degree are first collected and put in the highest cluster. Then they are removed from the graph, and the nodes with the lowest degree in the remaining graph are found and the process repeats.

\xhdrLessSpace{\texttt{sort-by-neighborhood}} This algorithm will output a partial order with all ordered pairs $(u \prec v)$ such that $\N(u)$ contains $\N(v)$. This condition holds when $r=0$. When $r>0$, we consider $|\N(v) \backslash N(u)| \leq r$ as $r$ is the average number of {\em extra} connections a node makes apart from duplication process. In most real-world data, we estimate $r$ as smaller than $1$, and hence the original check is sufficient.

\xhdrLessSpace{\texttt{peel-by-neighborhood}} Here, we find the set
$\{u: \nexists\, v |\N(v) \backslash \N(u)| \leq r \}$
(as mentioned before, it is sufficient to check $\N(v) \subset \N(u)$ in many practical cases) and mark it as the youngest cluster. These nodes are removed from the graph, and the process is repeated until it hits $G_{n_0}$. This algorithm makes use of the DAG of the neighborhood relationship and includes isolated nodes into the bins.

\subsection{Comparison with other graph models}
\label{subsec:comparison-with-other-models}
The node arrival order recovery problem in the DD-model is different from that in other graph models like Erd\H{o}s-Renyi graphs and preferential attachment graphs.

First, for a fixed graph $G_n$ on $n$ vertices, let us consider a set of graphs $\text{Adm}(G_n) = \{\sigma(G_n)\colon \sigma \in \Gamma(G_n)\}$. It is obvious that for the Erd\H{o}s-Renyi model, any graph in $\text{Adm}(G_n)$ is generated equally likely with a given seed graph $G_{n_0}$. Such property was also proved for the preferential attachment model in \cite{luczak2016asymmetry}. However, this does not hold for DD-model graphs as shown in the following example.
\begin{figure}[!htb]
    \centering
\includegraphics[scale=0.65]{./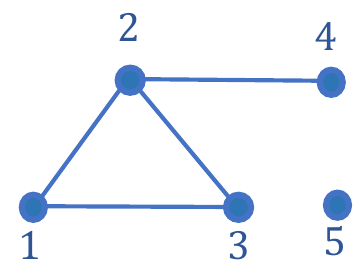}
        \caption{Example of asymmetric graph}
        \label{fig:equiprobability}
    \end{figure}

For the graph $G_n^{(1)}$ presented in Figure~\ref{fig:equiprobability}, let $G_{n_0}$ consists of vertices $1$, $2$, $3$, and let the parameters of the DD-model be $p = 0.2$ and $r = 0$. 
The $\P[\cG_n = G_n^{(1}]$ can be calculated iteratively using \eqref{eq:w_defn} as $0.068$. Now, consider the permutation
\[
\sigma = \begin{pmatrix} 1 & 2 & 3 & 4 & 5\\ 1 & 2 & 3 & 5 & 4 \end{pmatrix}.\]
Then $\sigma \in \Gamma(G_n^{1})$. Let $G_n^{(2)}= \sigma(G_n^{(1)})$. The $\P[\cG_n = G_n^{(2}]$ is $0.051$, and conditioned on the same structure probabilities of $G_n^{(1)}$ and $G_n^{(2)}$ are $0.5744$ and $0.4256$ respectively.
    \begin{figure}[!htb]
        \centering
        \includegraphics[scale=1.0]{./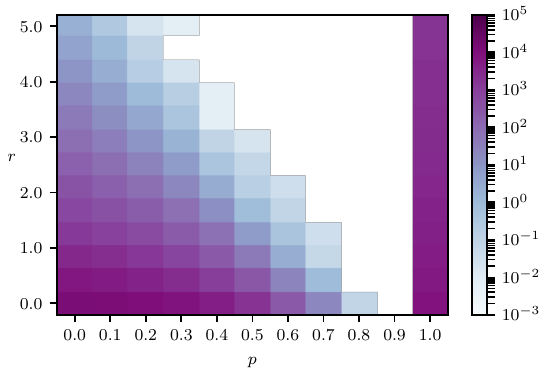}
        \caption{$\E \log |\Aut(G_n)|$, $G_n \sim \ddmodel(2000, p, r, K_{20})$, where $|\Aut(G_n)|$ is the number of automorphisms in graph $G_n$.}
        \label{fig:symmetry}
\end{figure}


Second, it is well known that both the Erd\H{o}s-Renyi graphs and preferential attachment graphs are {\em asymmetric}\footnote{An automorphism or symmetry of a graph $G$ is an isomorphism from a graph $G$ to itself.
We say that $G$ is symmetric if it has at least one nontrivial symmetry and that $G$ is asymmetric if the only symmetry of $G$ is the identity permutation.} with high probability \cite{kim2002asymmetry,luczak2016asymmetry}.
On the other hand, the graphs generated from the DD-model for a certain range of parameters show a significant amount of symmetry, as shown in Fig. \ref{fig:symmetry}. This is in accordance with many real-world networks (see Table \ref{tab:real-world_est_parameters} for examples).

Last, the behavior of the degree at time $t$ of the node arrived at an earlier time $s$ (denoted by $\deg_t(s)$) is different for all three models. For Erd\H{o}s-Renyi graph with edge probability $p$, it is known that $\E [\deg_t(s)] = p (t - 1)$. For the preferential attachment graphs, $\E[ \deg_t(s)] = \Theta\left(\sqrt{t/s}\right)$ (\cite{hofstad2016random}, Theorem 8.2).
However, for the DD-model, $\E[ \deg_t(s)] = \Theta\left(\left(t/s\right)^p s^{2 p - 1}\right)$ for any $t \ge s$~\cite{psexpected}. Note that when $s=O(1)$ -- the case of very old nodes -- the average degree is of order $t^p$. For $s=t$, we have $\E[\deg_t(t)]=O(t^{2p-1})$ which is growing only for $p>1/2$. For example, when $p=1$ degrees of all the nodes on average are of order $O(t)$. Thus oldest nodes in the graph need not have large average degrees as the graph evolves, and algorithms based on such a heuristic are not applicable for the DD-model.
Moreover, Frieze et al. \cite{frieze2020} has shown that $\deg_t(s)$ is concentrated around the mean for $s = O(1)$ in the sense that for any $A > 1$ we observe polynomial tail:
\begin{align*}
    \Pr&[\text{deg}_t(s) < C\, \E[\text{deg}_t(s) \log^{-k}{t}] \\
    & = \Pr[\text{deg}_t(s) > C\, \E[\text{deg}_t(s) \log^k{t}] = O(t^{-A}),
\end{align*}
for certain fixed constant $k$ and a constant $C$ dependent on $A$.
However, this is not the case for the last vertices, since they are copied from already existing nodes in the network, which would explain the ineffectiveness of greedy degree-based heuristics for $p \le 1/2$.

\section{Experiments}
In this section, we evaluate our methods on synthetic and real-world data sets. We made publicly available all the code and data of this project at \url{https://github.com/krzysztof-turowski/duplication-divergence}.

We present the following results in the coming sections.
\begin{itemize}[leftmargin= 2ex,topsep=1 ex]
    \item Synthetic networks:
    \begin{itemize}
      \item How well the LP-partial-order performs in comparison with the LP-clusters? (Figures~\ref{fig:temporal_curve_LP_clustering_LP_partial_order} and \ref{fig:time_comparison_LP_clustering_LP_partial_order})
      \item Fixing the LP-partial-order, how is the convergence of $p_{u,v}$'s that are estimated via sequential importance sampling schemes \texttt{local-unif-sampling} and \texttt{high-prob-sampling} as to the exact $p_{u,v}$, in terms of resulting precision? (Figure~\ref{fig:synthetic_graph_exact})
      \item Fixing LP-partial-order for the LP formulation and \texttt{local-unif-sampling} for the importance sampling strategy, we study the performance of unsupervised algorithms in comparison with greedy strategies specific to the DD-model. (Figure~\ref{fig:synthetic_graph_algorithms})
      \item For the semi-supervised algorithms, we show results (precision and density) for various parameter configurations and study their influence on the performance. (Tables~\ref{tab:synthetic-supervised} and \ref{tab:syn_data_change})
    \end{itemize}
    \item Real-world networks: For the semi-supervised algorithms, how the precision improves with a small change in the training size, and how does the results compare against greedy algorithms of the DD-model? (Figure~\ref{fig:real-world-nw} and Table~\ref{tab:real-world-tab})
\end{itemize}

\subsubsection*{Maximum likelihood estimation}
For deriving total order, a natural solution will be the maximum likelihood estimator (MLE).
\[ \argmax_{\sigma \in S_n} \P[\cG_n = \bpi^{-1}(H_n)| \bpi^{-1} = \sigma] \]
But we do not consider MLE explicitly here because it is known that many networks exhibit large number of symmetries (see Table~\ref{tab:real-world_est_parameters} for some examples), and thus there will be large number of total orders that achieve the MLE criterion with low value of precision. In fact, our optimal formulation in Section~\ref{sec:formln} already captures the MLE solutions and outputs them if they have high precision. Moreover for general graph models, the MLE computation would require checking all $\sigma \in S_n$ which incurs $\Theta(n!)$ computational complexity.

\subsection{Synthetic networks}
\label{subsec:synthetic_data_numerical}

In the following results on synthetic networks, $\sigma_{\text{tries}}$ denote the number of Markov chain sample paths (for sequential importance sampling) used for estimating $p_{u,v}$ for all $u,v \in V(H_n)$. All the studies are performed on multiple graph realizations from the DD-model with specified parameters, and the results are averaged over them.
When we make a comparison based on LP formulation, we plot precision ($\theta$) vs minimum density $\varepsilon$ ($\delta \geq \varepsilon$) in accordance with the formulations in Sections~\ref{subsec:optzn_cluster} and \ref{subsec:optzn_partial_orders}.

\begin{figure}[!htb]
    \centering
    \includegraphics[scale=1.0]{./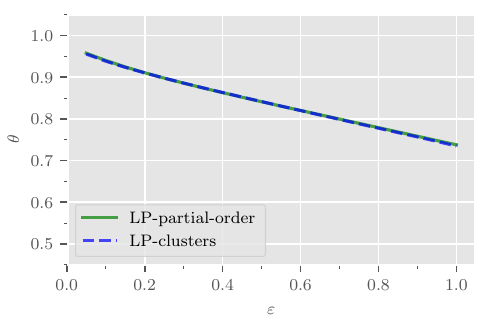}
    \caption{Comparison between LP-clusters and LP-partial-order formulation: $G_n \sim \ddmodel(n=30,p=0.6,r=1.0, G_{n_0}=K_{10})$ and $\sigma_{\text{tries}} = 100,000$. Results are averaged over $100$ graph generations. Sampling method: \texttt{local-unif-sampling}.}
    \label{fig:temporal_curve_LP_clustering_LP_partial_order}
\end{figure}
\begin{figure}[!htb]
    \centering
    \includegraphics[scale=1.0]{./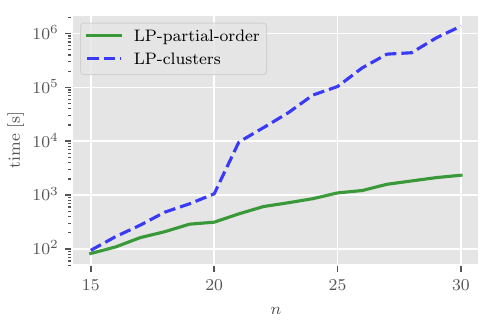}
    \caption{Time plot for LP-partial-order vs. LP-clusters. All the experiments were performed on 48-CPU cluster, with Intel(R) Xeon(R) CPU E7-8857 v2 @ 3.00GHz and 256GB RAM.}
    \label{fig:time_comparison_LP_clustering_LP_partial_order}
\end{figure}

In Figure~\ref{fig:temporal_curve_LP_clustering_LP_partial_order}, we compare the performance of the linear programming approximations LP-cluster (Section~\ref{subsec:optzn_cluster}) and LP-partial-order (Section~\ref{subsec:optzn_partial_orders}). Since clustering output from the LP-cluster scheme induces a partial order, we use the same measures of precision and density that are defined for partial order for comparing performances of LP-cluster and LP-partial-order schemes.
Our experiments confirm that for the same graph, with the same set of $\{p_{u,v}, \forall u, v \in V(H_n)\}$, the performance of them are nearly identical.
However, Figure~\ref{fig:time_comparison_LP_clustering_LP_partial_order} shows that the difference between the running time of both the formulations is huge -- the LP-clusters which finds clusters becomes barely feasible, whereas LP-partial-order which outputs partial order runs in a reasonable time.

\begin{figure*}[htb!]
    \centering
    \begin{subfigure}{0.48\textwidth}
        \centering
        \includegraphics[scale=1.0]{./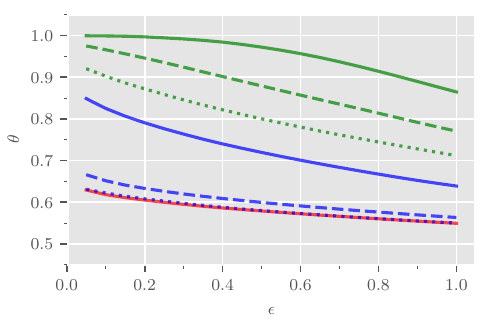}
    \end{subfigure}
    \begin{subfigure}{0.48\textwidth}
        \centering
        \includegraphics[scale=1.0]{./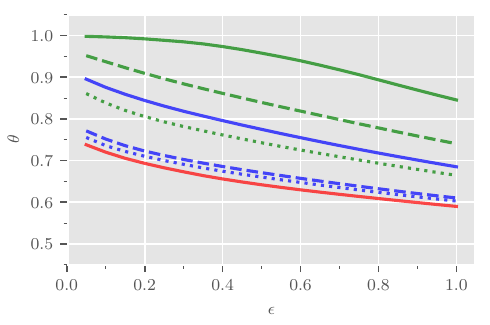}
    \end{subfigure}
    \begin{subfigure}{1\textwidth}
        \centering
        \includegraphics[scale=1.0]{./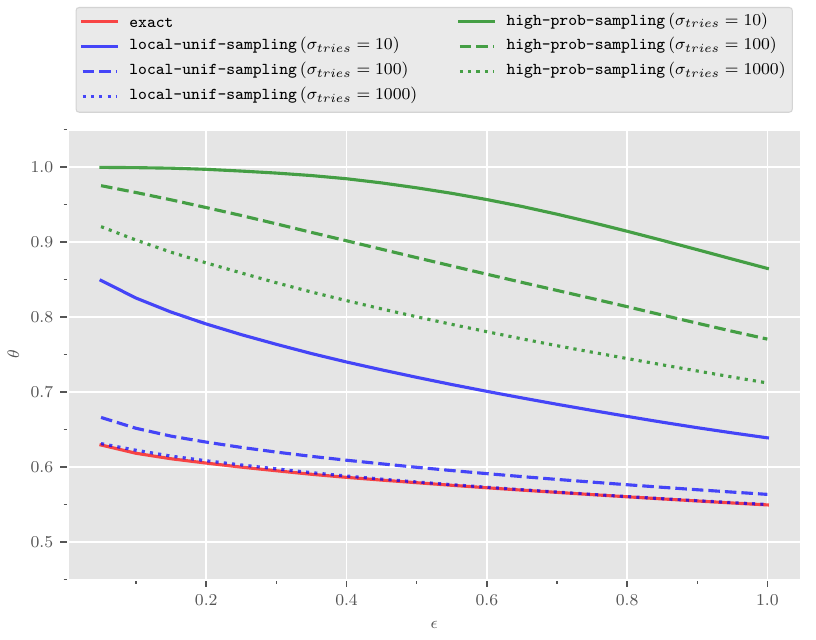}
    \end{subfigure}
    \caption{Results on synthetic networks with exact curve: $G_n \sim \ddmodel(13, p, 1.0, G_{n_0})$ for $p = 0.3$ (left) and $0.6$ (right), averaged over $100$ graphs. $G_{n_0}$ is generated from Erd\H{o}s-Renyi graph with $n_0 = 4$ and $p_0=0.6$.}
    \label{fig:synthetic_graph_exact}
\end{figure*}
\begin{figure*}[htb!]
    \centering
    \begin{subfigure}{0.48\textwidth}
        \centering
        \includegraphics[scale=1.0]{./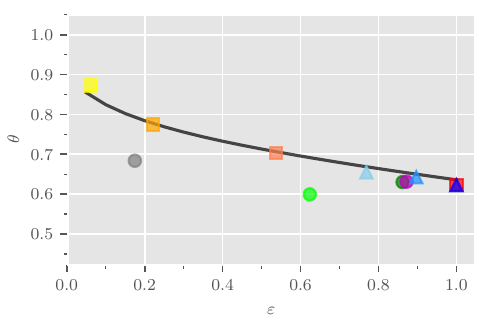}
    \end{subfigure}
    \begin{subfigure}{0.48\textwidth}
        \centering
        \includegraphics[scale=1.0]{./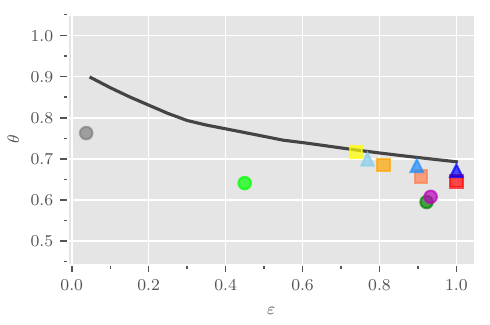}
    \end{subfigure}
    \begin{subfigure}{1\textwidth}
        \centering
        \includegraphics[scale=1.0]{./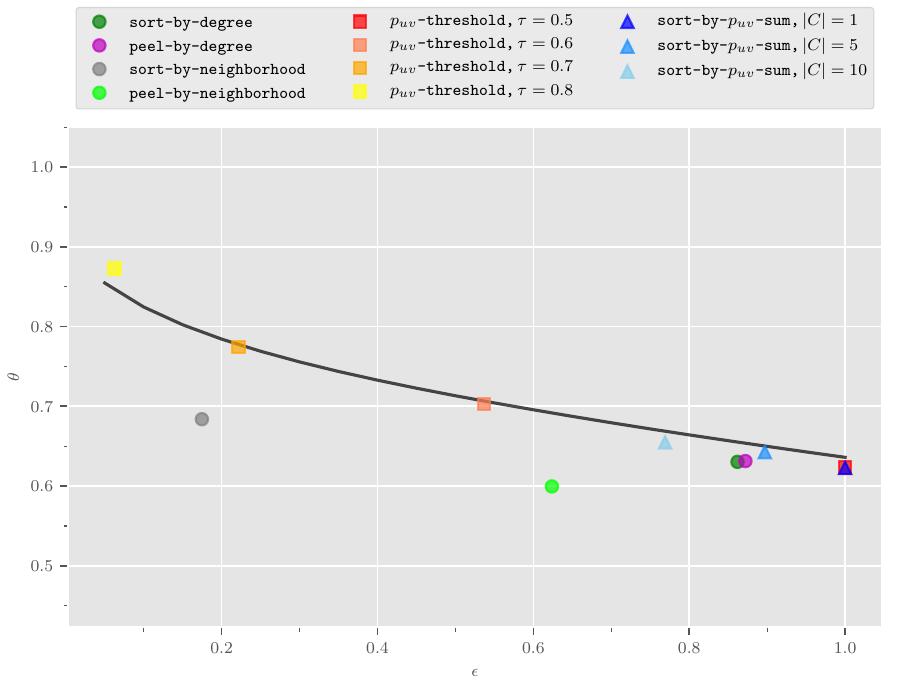}
    \end{subfigure}
    \caption{Results on synthetic networks with greedy and  {\em unsupervised learning} $p_{u,v}$-based algorithms: $G_n \sim \ddmodel(50, p, 1.0, G_{n_0})$ for $p = 0.3$ (left) and $0.6$ (right), averaged over $100$ graphs. $p_{u,v}$-based algorithms use $\sigma_{\text{tries}} = 100,\!000$. $G_{n_0}$ is generated from Erd\H{o}s-Renyi model with $n_0 = 10$ and $p_0=0.6$. The theoretical curve is estimated via \texttt{local-unif-sampling}.}
    \label{fig:synthetic_graph_algorithms}
    \vspace*{-1 em}
\end{figure*}


Figure~\ref{fig:synthetic_graph_exact} examines the precision of LP-partial-order obtained with approximated $\{p_{u,v}, \forall u, v \in V(H_n)\}$ via sequential importance strategies (\texttt{local-unif-sampling} and \texttt{high-prob-sampling}) and that obtained with the $\{p_{u,v}, \forall u, v \in V(H_n)\}$ that is calculated exactly by considering all the possible $n!$ orderings.
We consider a small size ($n=13$) example here since it becomes infeasible to compute the exact curve for larger values of $n$.
We observe that the convergence of the estimated curve is highly dependent on the method of estimation: \texttt{local-unif-sampling} method requires only $100$ samples, but \texttt{high-prob-sampling} is still visibly far away from LP optimal curve even for $1000$ samples.
Thus, along with the computational reasons stated in Section~\ref{sec:approx_opt_soln}, we use \texttt{local-unif-sampling} in the subsequent experiments.

In Figure~\ref{fig:synthetic_graph_algorithms}, we compare results of the unsupervised algorithms with the estimated optimal curve via \texttt{local-unif-sampling}.
It turns out that greedy algorithms (\texttt{sort-by-degree}, \texttt{sort-by-neighborhood}, \texttt{peel-by-degree} and \texttt{peel-by-neighborhood}) perform reasonably well for small $p$, but their performance deteriorates for higher values of $p$.
On the other hand, $p_{u,v}$-based algorithms (\texttt{sort-by-$p_{u,v}$-sum} and \texttt{$p_{u,v}$-threshold}) offer consistent, close to the theoretical bound, behavior different values of $p$ (figure shows only two $p$s due to space limitations). Moreover, the bin size $|C|$ in \texttt{sort-by-$p_{u,v}$-sum} and threshold $\tau$ in \texttt{$p_{u,v}$-threshold} algorithm offer a trade-off between higher precision and higher density.
The larger the bin size or the higher the threshold, we observe a decrease in density, but increase in precision as we stay close to the theoretical curve.

Table \ref{tab:synthetic-supervised} contains the results of semi-supervised learning extensions of the $p_{u,v}$-based algorithms.
A small increase in the percentage of the training set $\alpha$ yields a large increases in precision for all sets of parameters.
Moreover, for larger bin size in \texttt{sort-by-$p_{u,v}$-sum} algorithm we observe mainly only an increase in precision with $\alpha$, but for \texttt{$p_{u,v}$-threshold} algorithm both $\delta$ and $\theta$ grow visibly with $\alpha$, especially for large $\tau$.
In turn, when we fix $\alpha$ and increase the bin size in \texttt{sort-by-$p_{u,v}$-sum} algorithm, precision remains almost same, but density decreases significantly. And if we do the analogous procedure for \texttt{$p_{u,v}$-threshold} algorithm (fix $\alpha$ and increase threshold), then precision grows, but in expense of a visible fall of density.
All the above conclusions are summarized in Table~\ref{tab:syn_data_change}.
\begin{table}[!htb]
    \centering
    \begin{tabular}{@{}lccccc@{}}\toprule
        &  & \multicolumn{2}{c}{$p = 0.3$} & \multicolumn{2}{c}{$p = 0.6$}\\
        \cmidrule{3-4} \cmidrule{5-6}
        Algorithm & $\alpha$ & $\delta$ & $\theta$ & $\delta$ & $\theta$ \\ \midrule
        \rowcolor{gray!25}    \texttt{sort-by-$p_{u,v}$-sum}, $|C| = 1$ & $0.001$ & $1.0$ & $0.598$ & $1.0$ & $0.613$ \\
        \rowcolor{gray!25}    \texttt{sort-by-$p_{u,v}$-sum}, $|C| = 1$ & $0.01$ & $1.0$ & $0.643$ & $1.0$ & $0.650$ \\
        \rowcolor{gray!25}    \texttt{sort-by-$p_{u,v}$-sum}, $|C| = 1$ & $0.1$ & $1.0$ & $0.836$ & $1.0$ & $0.832$ \\
        \texttt{sort-by-$p_{u,v}$-sum}, $|C| = 10$ & $0.001$ & $0.769$ & $0.605$ & $0.769$ & $0.626$ \\
        \texttt{sort-by-$p_{u,v}$-sum}, $|C| = 10$ & $0.01$ & $0.768$ & $0.661$ & $0.767$ & $0.660$ \\
        \texttt{sort-by-$p_{u,v}$-sum}, $|C| = 10$ & $0.1$ &$0.758$ & $0.864$ & $0.759$ & $0.859$ \\
        \rowcolor{gray!25} \texttt{$p_{u,v}$-threshold}, $\tau = 0.5$ & $0.001$ &$1.0$ & $0.604$ & $1.0$ & $0.617$ \\
        \rowcolor{gray!25} \texttt{$p_{u,v}$-threshold}, $\tau = 0.5$ & $0.01$ &$1.0$ & $0.637$ &$1.0$ & $0.649$ \\
        \rowcolor{gray!25}\texttt{$p_{u,v}$-threshold}, $\tau = 0.5$ & $0.1$ &$1.0$ & $0.829$ &$1.0$ & $0.823$ \\
        \texttt{$p_{u,v}$-threshold}, $\tau = 0.9$ & $0.001$ & $0.010$ & $0.906$ & $0.028$ & $0.871$ \\
    \texttt{$p_{u,v}$-threshold}, $\tau = 0.9$ & $0.01$ & $0.020$ & $0.951$ & $0.090$ & $0.907$ \\
\texttt{$p_{u,v}$-threshold}, $\tau = 0.9$ & $0.1$ & $0.521$ & $0.966$ & $0.559$ & $0.960$ \\
        \bottomrule
    \end{tabular}
    \caption{Results on synthetic networks with  {\em semi-supervised learning} $p_{u,v}$-based algorithms: $G_n \sim \ddmodel(50, p, 1.0, G_{n_0})$, averaged over $100$ graphs. $p_{u,v}$-based algorithms use $\sigma_{\text{tries}} = 100,000$. $G_{n_0}$ is Erd\H{o}s-Renyi graph with $n_0 = 10$ and $p_0=0.6$.}
    \label{tab:synthetic-supervised}
\end{table}
\begin{table}[!htb]
    \centering
    \begin{tabular}{lccccc} \toprule
        Algorithm & Fixed & Free & Free & Free \\
        \midrule
        \texttt{sort-by-$p_{u,v}$-sum} & $\alpha$  & $|C| \nearrow$    & $\delta \searrow$ & $\theta \approx$  \\
        \texttt{sort-by-$p_{u,v}$-sum} & $|C|$    & $\alpha \nearrow$ &  $\delta \approx$  & $\theta \nearrow$ \\
        \texttt{$p_{u,v}$-threshold}   & $\alpha$ & $\tau \nearrow$   & $\delta \searrow$ & $\theta \nearrow$ \\
        \texttt{$p_{u,v}$-threshold}   & $\tau$ & $\alpha \nearrow$ & $\delta \nearrow$ & $\theta \nearrow$ \\
        \bottomrule
    \end{tabular}
    \caption{Conclusions from synthetic data: how the metrics behave by fixing one of the parameters and keeping other free. The symbol $\approx$ indicates the changes are not significant.}
    \label{tab:syn_data_change}
        \vspace{-2em}
\end{table}
\subsection{Real-world networks}
    \begin{figure}[htb!]
    \centering
    \begin{subfigure}{0.48\textwidth}
        \centering
        \includegraphics[scale=0.9]{./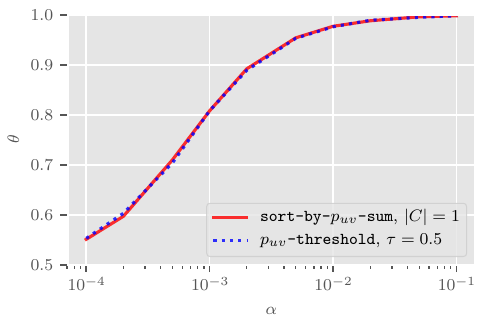}
        \caption{ArXiv network}
        \label{fig:arxiv}
    \end{subfigure}
    \begin{subfigure}{0.48\textwidth}
        \centering
        \includegraphics[scale=0.9]{./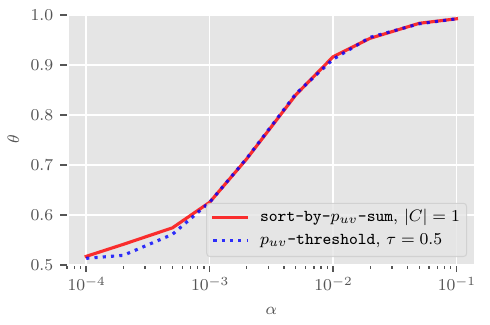}
        \caption{Simple English Wikipedia network}
        \label{fig:simple_english}
    \end{subfigure}
    \begin{subfigure}{0.48\textwidth}
        \centering
        \includegraphics[scale=0.9]{./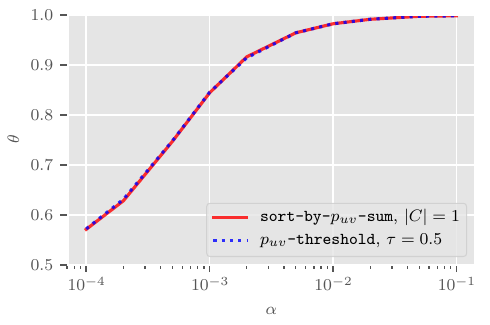}
        \caption{CollegeMsg network}
        \label{fig:collegemsg}
    \end{subfigure}
    \caption{Real-world networks: results of semi-supervised learning}
    \label{fig:real-world-nw}
    \vspace{-1 em}
\end{figure}

We consider the following three real-world networks which have the ground truth of node and edge age arrival order available. The directed networks are treated as undirected in our studies. All the datasets are taken from SNAP repository~\cite{snapnets}.

\begin{itemize}
\item{\em The ArXiv network}: It is a directed network with $7,\!464$ nodes and $116,\!268$ edges. Here the nodes are the publications in arXiv online repository of theoretical high energy physics, and the edges are formed when a publication cite another. In this network, many nodes share the same arrival time and date, and hence the true arrival order of nodes is available only in bins of count $1,\!457$.

\item{\em The Simple English Wikipedia dynamic network}: A directed network with $10,\!000$ nodes and $169,\!894$ edges. Nodes represent articles and an edge indicates that a hyperlink was added. It shows the evolution of hyperlinks between articles of the Simple English Wikipedia.

\item{\em CollegeMsg network}: In this dataset of private message sent on an online social platform at University of California, Irvine, nodes represent users and an edge from $u$ to $v$ indicates user $u$ sent a private message to user $v$ at time $t$. Number of nodes is $1,\!899$ and number of edges is $59,\!835$.
\end{itemize}

Table~\ref{tab:real-world_est_parameters} shows estimated parameters of the duplication-divergence model for the above networks using the fitting technique in \cite{sreedharan2020revisiting}.
\begin{table}[htb!]
    \centering
    \small{
        \begin{tabular}[htb!]{lcccc}
            \toprule
            Network ($G_{\text{obs}}$) & $\log |\Aut(G_{\text{obs}})|$ & $\widehat{p}$ & $\widehat{r}$\\
            \midrule
            ArXiv & $12.59$ & $0.72$ & $1.0$\\
            Wikipedia & $1018.94$ & $0.66$ & $0.5$  \\
            CollegeMsg & $231.54$ & $0.65$ & $0.45$  \\
            \bottomrule
    \end{tabular}}
    \caption{Parameters of the duplication-divergence model estimated for the real-world networks considered in this paper.}
    \label{tab:real-world_est_parameters}
\end{table}

Figure~\ref{fig:real-world-nw} show the result of semi-supervised learning. Here $\alpha$ represents the proportion of all pairs that is considered as training set, i.e., size of the training set is $\alpha \binom{n}{2}$. We randomly pick $\alpha \binom{n}{2}$ pairs and the results presented are average over $100$ different such random sets. We observe that a small increase in $\alpha$ leads to a huge change in the precision. This also happens in synthetic data and is caused by the large structural dependency within networks, unlike in classical machine learning where data is often assumed to be independent. This helps us to get a near-perfect clustering (precision close to $1$) with only $1\%$ of the labeled nodes.

Finally, the semi-supervised approach helps to obtain a significant improvement over greedy algorithms.
As it is shown in Table \ref{tab:real-world-tab}, greedy algorithms found orderings with precision ranging from $0.47$ to $0.63$ for significant values of the density (it's easy to achieve a precision of $0.78$ like CollegeMsg data set when the density of pairs outputted is as low $0.01$) -- which is not much better than random guess.
\begin{table}[htb!]
    \centering
    \begin{tabular}[b]{l@{\hspace{1\tabcolsep}}c@{\hspace{0.8\tabcolsep}}cc@{\hspace{0.8\tabcolsep}}cc@{\hspace{0.8\tabcolsep}}c}
        \toprule
        & \multicolumn{2}{c}{ArXiv} & \multicolumn{2}{c}{Wikipedia} & \multicolumn{2}{c}{CollegeMsg} \\
        \cmidrule{2-3} \cmidrule{4-5} \cmidrule{6-7}
        Greedy algorithm & $\delta$ &$\theta$ & $\delta$ & $\theta$ & $\delta$ & $\theta$ \\
        \midrule
        \texttt{sort-by-degree} & $0.98$ & $0.47$ & $0.96$ & $0.59$ & $0.92$ & $0.63$ \\
        \texttt{peel-by-degree} &  $0.98$ & $0.46$ & $0.96$ & $0.59$ & $0.92$ & $0.63$ \\
        \texttt{sort-by-neighborhood} &  $0.0001$ & $0.51$ & $0.03$ & $0.60$ & $0.01$ & $0.78$ \\
        \texttt{peel-by-neighborhood} & $0.13$ & $0.50$ & $0.80$ & $0.593$ & $0.75$ & $0.61$ \\
        \bottomrule
    \end{tabular}
    \caption{Real-world networks: results of greedy algorithms.}
    \label{tab:real-world-tab}
\vspace*{-2 em}
\end{table}

\section{Discussion and future work}

In this article we presented a framework for clustering of nodes in dynamic networks based on latent temporal information.
We provided a way to find an upper bound on the optimum clustering quality, and proposed several algorithms that perform well and capable of including some external information about the precedence of vertices in their arrival to the network.

Further work in our proposed framework can go in several directions. For example, one can explore various ways to speed up the algorithms presented in this work. This can be accomplished by finding a good importance sampling distribution, which will lead to a faster convergence of estimates of $p_{u, v}$ (probability that node $u$ is arrived earlier than node $v$), for all the nodes $u$ and $v$ in the network, to the true values.
From a theoretical perspective, there remains an interesting question of finding bounds on the convergence speed of estimates of ${p}_{u, v}$ with various importance sampling distributions.
One can also look into clever bookkeeping techniques which will result in reducing the computation time of a single path in our sequential importance sampling algorithm.
Another direction is the application of the proposed framework and solution to other types of random network models that not only involve only the addition of vertices and edges, but also deletion of them.

\bibliographystyle{IEEEtran}
\bibliography{nodeorder_DD.bib}

\begin{IEEEbiography}[{\includegraphics[width=1in,height=1in,clip,keepaspectratio]{./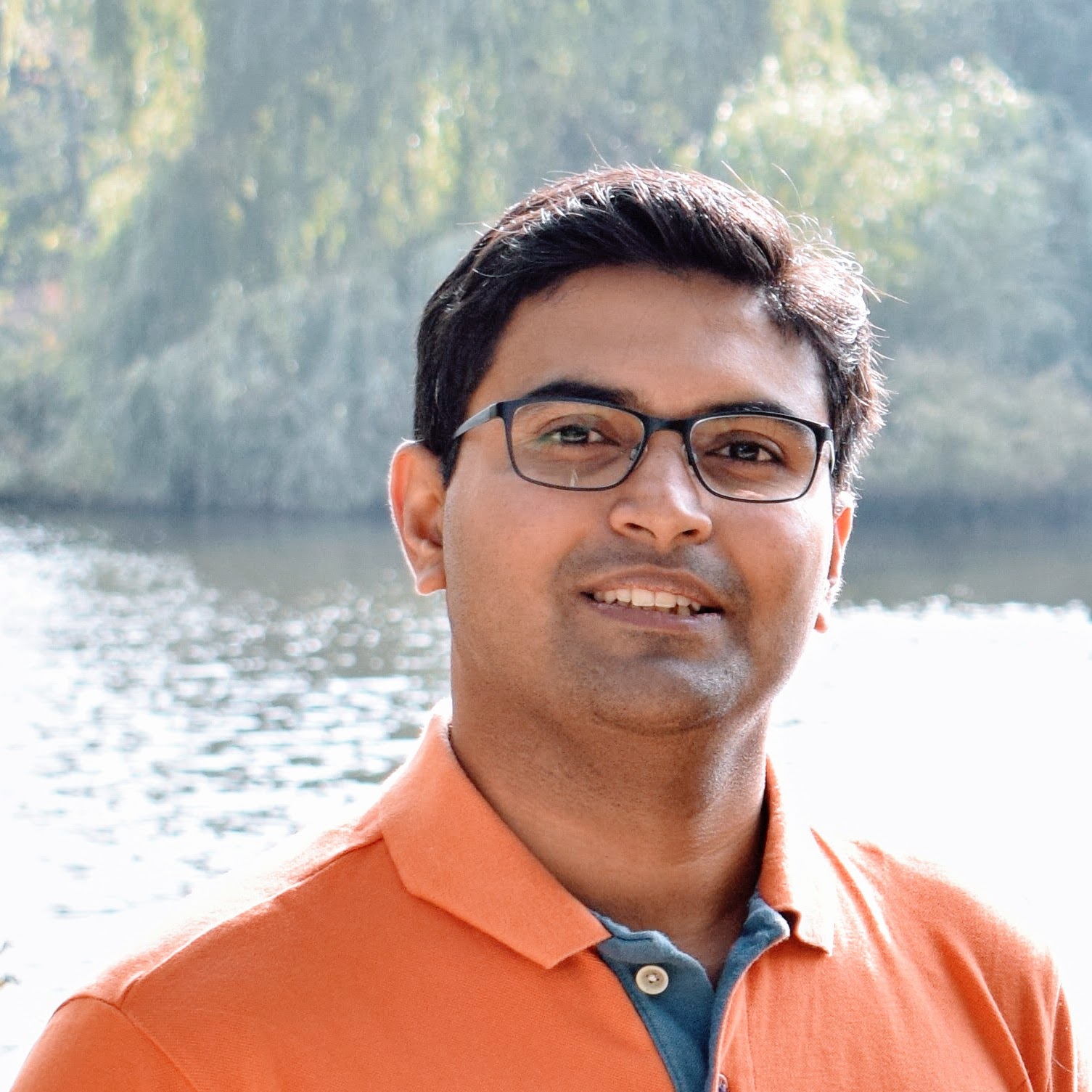}}]{Jithin K.\ Sreedharan}
is a Postdoctoral Research Associate at the NSF Center for Science of Information and Dept.\ of Computer Science in Purdue University. He received his Ph.D.\ in computer science from INRIA, France, in 2017 with a fellowship from INRIA-Bell Labs joint lab. Before that, he finished M.S.\ from Indian Institute of Science (IISc), Bangalore, in 2013, and received the best thesis award. 
His current works focus on data mining algorithms for large networks with probabilistic guarantees, statistical modeling and inference on networks, and distributed techniques for analyzing big matrices.
\end{IEEEbiography}

\begin{IEEEbiography}[{\includegraphics[width=1in,height=1in,clip,keepaspectratio]{./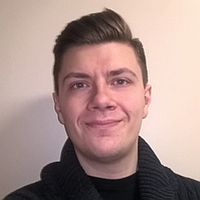}}]{Krzysztof Turowski}
is currently assistant professor at the Theoretical Computer Science Department at the Jagiellonian University, Krakow, Poland.
He received his MS and PhD degrees from Gdansk University of Technology, Poland in 2011 and 2015, respectively, both in computer science. From 2010 to 2016 he was employed at the Department of Algorithms and System Modelling at Gdansk University of Technology and from 2016 to 2018 he worked at Google as a software developer for Google Compute Engine.
From 2018 to 2019 he was a Postdoctoral Research Scholar in the NSF Center for Science of Information at Purdue University.
His research interests include graph theory (especially various models of graph coloring), analysis of algorithms and information theory.
\end{IEEEbiography}

\begin{IEEEbiography}[{\includegraphics[width=1in,height=1in,clip,keepaspectratio]{./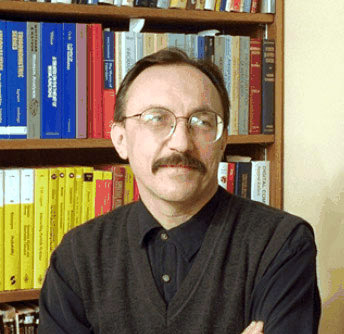}}]{Wojciech Szpankowski}
is Saul Rosen Distinguished Professor of Computer Science
at Purdue University where he teaches and conducts research in analysis of algorithms,
information theory, analytic combinatorics, data science, random structures,
and stability problems of distributed systems.
He held several Visiting Professor/Scholar positions, including
McGill University, INRIA, France,
Stanford, Hewlett-Packard Labs, Universite de Versailles, University of
Canterbury, New Zealand, Ecole Polytechnique, France, the Newton Institute,
Cambridge, UK, ETH, Zurich, and Gdansk University of Technology, Poland.
He is a Fellow of IEEE, and the Erskine Fellow.
In 2010 he received the Humboldt Research Award and in 2015
the Inaugural Arden L. Bement Jr. Award. He is also the recipient of 2020 Flajolet Lecture Prize.
He published two books: ``Average Case Analysis of
Algorithms on Sequences'', John Wiley \& Sons, 2001, and
``Analytic Pattern Matching: From DNA to Twitter'', Cambridge, 2015.
In 2008 he launched the interdisciplinary Institute for Science of
Information, and in 2010 he became the Director of the newly established NSF
Science and Technology Center for Science of Information.
\end{IEEEbiography}

\vfill
\end{document}